\documentclass[11pt, a4paper]{article}
\usepackage[left=25mm,right=25mm,top=25mm,bottom=30mm]{geometry}
\usepackage{framed,latexsym,amssymb,amsmath,amsthm,float,cite,complexity,setspace,color,graphicx,lineno,hyperref,url,lmodern}
\usepackage[shortlabels]{enumitem}
\usepackage[ddmmyyyy]{datetime}
\usepackage[shortend, vlined]{algorithm2e}
\usepackage{tikz,pgfplots}
\pgfplotsset{compat=1.18}

\usepackage{thmtools,thm-restate}
\usepackage{regexpatch}
\makeatletter
\xpatchcmd\thmt@restatable{%
\csname #2\@xa\endcsname\ifx\@nx#1\@nx\else[{#1}]\fi
}{%
\ifthmt@thisistheone
\csname #2\@xa\endcsname\ifx\@nx#1\@nx\else[{#1}]\fi
\else
\csname #2\@xa\endcsname[{restated}]
\fi}{}{}
\makeatother

\newtheorem{theorem}{Theorem}
\newtheorem{lemma}[theorem]{Lemma}
\newtheorem{corollary}[theorem]{Corollary}
\newtheorem{claim}{Claim}

\theoremstyle{definition}
\newtheorem{remark}[theorem]{Remark}
\newtheorem{question}{Question}

\floatstyle{boxed}
\newfloat{problem}{tbhH}{prb}

\tikzset{
	dot/.style = {circle, fill, minimum size=#1,
		inner sep=0pt, outer sep=0pt},
	dot/.default = 5pt
}
\usetikzlibrary{calc}
\usetikzlibrary{tikzmark}

\begin{document}
\onehalfspace

\title{Cutwidth and Crossings}
\author{Johannes Rauch$^1$ \and Dieter Rautenbach$^1$}
\date{}
\maketitle
\vspace{-10mm}
\begin{center}
	{\small 
		$^1$ Institute of Optimization and Operations Research, Ulm University, Ulm, Germany\\
		\texttt{\{johannes.rauch, dieter.rautenbach\}@uni-ulm.de}
	}
\end{center}

\begin{abstract}
We provide theoretical insights around the cutwidth of a graph and the One-Sided Crossing Minimization (OSCM) problem.
OSCM was posed in the Parameterized Algorithms and Computational Experiments Challenge 2024, where the cutwidth of the input graph was the parameter in the parameterized track.
We prove an asymptotically sharp upper bound on the size of a graph in terms of its order and cutwidth.
As the number of so-called unsuited pairs is one of the factors that determine the difficulty of an OSCM instance, we provide a sharp upper bound on them in terms of the order $n$ and the cutwidth of the input graph.
If the cutwidth is bounded by a constant, this implies an $\mathcal{O}(2^n)$-time algorithm, while the trivial algorithm has a running time of $\mathcal{O}(2^{n^2})$.
At last, we prove structural properties of the so-called crossing numbers in an OSCM instance.
\end{abstract}

\section{Introduction}\label{sec:intro}
Each year the Parameterized Algorithms and Computational Experiments (PACE)
Challenge~\cite{pacechallenge} poses at least one optimization problem.
The computer science community is challenged to design and implement computer programs that solve these optimization problems.
Usually, there are multiple tracks: 
\begin{itemize}
\item In the exact track, the solver should find an optimal solution in 30 minutes.
\item In the heuristic track, the solver should find a best possible solution in 5 minutes.
\item The parameterized track is similar to the exact track, but additional information, usually a parameter and a corresponding certificate, is given together with the input.
\end{itemize}
Among others, the goals of the PACE Challenge are to 
``bridge the divide between the theory of algorithm design and analysis, and the practice of algorithm engineering'' and to ``inspire new theoretical developments''\cite{pacechallenge}.
In 2024, they posed the following problem for the exact and heuristic track.

\begin{problem}[H]
\noindent
\textsc{One-Sided Crossing Minimization (OSCM)}\\
\textit{Input:} A bipartite graph $G$ with partite sets $A$ and $B$, and a linear ordering $\pi$ of $A$.\\
\textit{Output:} A linear ordering $\sigma$ of $B$ that minimizes the number of \emph{crossings}, that is, $\sigma$ minimizes the number of pairs of edges $a_1b_1$ and $a_2b_2$ of $G$ with $a_1,a_2 \in A$, $b_1,b_2 \in B$, $\pi(a_1) > \pi(a_2)$, and $\sigma(b_1) < \sigma(b_2)$.
\end{problem}

\noindent
For the parameterized track they posed the following problem.
\begin{problem}[H]
\noindent
\textsc{One-Sided Crossing Minimization[Cutwidth] (OSCM[CW])}\\
\textit{Input:} A bipartite graph $G$ with partite sets $A$ and $B$, and a linear ordering $\pi$ of $V(G) = A \dot{\cup} B$.\\
\textit{Parameter:} The cutwidth of $G$ with respect to $\pi$.\\
\textit{Output:} A linear ordering $\sigma$ of $B$ that minimizes the number of \emph{crossings} as in OSCM.
\end{problem}

\noindent
We refer the reader to Section~\ref{sec:pre} for a definition of unknown notions.
We were indeed inspired by the PACE Challenge and present theoretical result around cutwidth alone, cutwidth and crossings together, and crossings alone.

\medskip
Our first result is with regard to the cutwidth of a graph $G$.
The number of vertices of $G$ is its \emph{order}, and the number of edges is its \emph{size}.
Intuitively, similar to other width-parameters, the size of a graph is bounded by a function of the cutwidth.
In Theorem~\ref{thm:size} we specify this intuition and give two bounds on the size of a graph in terms of its cutwidth.
Let $G$ have order $n$ and cutwidth $w$.
The bound of Theorem~\ref{thm:size}~(i) is for ``sparse'' graphs when $w = o(n^2)$.
It is asymptotically sharp.
The bound of Theorem~\ref{thm:size}~(ii) is for ``dense'' graphs when $w = \Theta(n^2)$.

\begin{theorem}[restate=thmubs]\label{thm:size}
Let the graph $G$ have order $n$, size $m$, and cutwidth $w$.
\begin{enumerate}[(i)]
\item $m \leq n\sqrt{2w} + \mathcal{O}(1)$.
\item If $c \in \left(0, \frac{1}{2}\right]$ is such that $w = c(1-c)n^2$ and $cn \geq 1$, then
\[
m \leq \left( \frac{c(4c^2 + 4c\sqrt{c(1-c)} - 12c - 4\sqrt{c(1-c)} + 9)}{6 \sqrt{c(1-c)}} \right)n^2 + \mathcal{O}(n).
\]
\end{enumerate}
\end{theorem}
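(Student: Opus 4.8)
The plan is to fix an optimal linear layout and double-count edges against the cuts it induces. Let $v_1,\dots,v_n$ be a linear ordering of $V(G)$ realizing the cutwidth, and for $i\in\{1,\dots,n-1\}$ let $C_i$ be the number of edges $v_jv_k$ with $j\le i<k$; by definition $C_i\le w$, and trivially $C_i\le i(n-i)$. For an edge $e=v_jv_k$ with $j<k$ call $\ell(e)=k-j\ge 1$ its \emph{span}; since $e$ is counted precisely by $C_j,C_{j+1},\dots,C_{k-1}$, we get
\[
\sum_{e\in E(G)}\ell(e)=\sum_{i=1}^{n-1}C_i\le\sum_{i=1}^{n-1}\min\{i(n-i),w\}.
\]
The second ingredient is the trivial cap that at most $n-\ell$ edges have span $\ell$. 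Both facts are immediate from the definitions, so the theorem reduces to an extremal estimate for multisets of spans.

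For (i): ordering the $m$ edges by span, the cap shows the $j$-th shortest edge has span at least $j/n$ (at most $tn$ edges have span $\le t$), hence $\sum_e\ell(e)\ge\frac{m(m+1)}{2n}>\frac{m^2}{2n}$. Combining this with $\sum_e\ell(e)\le (n-1)w<nw$ gives $m^2<2n^2w$, i.e.\ $m\le n\sqrt{2w}$, the $\mathcal{O}(1)$ only absorbing integrality. Asymptotic sharpness follows from the band graph $B_{n,L}$ on $v_1,\dots,v_n$ with $v_jv_k\in E$ iff $|j-k|\le L$: the identity layout gives every cut at most $\binom{L+1}{2}$, so for $L=\lfloor\sqrt{2w}\rfloor$ its cutwidth is $\le w$, while $|E(B_{n,L})|=nL-\binom{L+1}{2}=n\sqrt{2w}-\mathcal{O}(w)=n\sqrt{2w}(1-o(1))$ once $w=o(n^2)$.

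For (ii) the crude estimate $\sum_e\ell(e)\ge m^2/(2n)$ is too weak in the dense regime, so one must exploit that long spans are genuinely forced. First evaluate the right-hand side: $i(n-i)\ge w$ holds exactly for $i\in[cn,(1-c)n]$, so splitting the sum at these crossover points and applying the usual power-sum formulas yields $\sum_{i=1}^{n-1}\min\{i(n-i),w\}=\bigl(c-2c^2+\tfrac43c^3\bigr)n^3+\mathcal{O}(n^2)$. On the other side, among all span-multisets of size $m$ respecting the caps $n-\ell$, the quantity $\sum_e\ell(e)$ is minimized by saturating the shortest spans; writing $m=(\beta-\tfrac{\beta^2}{2})n^2+\mathcal{O}(n)$ with $\beta\in[0,1]$ (so that the retained edges are those of span $\le\beta n$) gives $\sum_e\ell(e)\ge(\tfrac{\beta^2}{2}-\tfrac{\beta^3}{3})n^3+\mathcal{O}(n^2)$. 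Chaining the two estimates forces $\tfrac{\beta^2}{2}-\tfrac{\beta^3}{3}\le c-2c^2+\tfrac43c^3+\mathcal{O}(1/n)$; maximizing $m$ means taking $\beta$ as large as this permits, and substituting that value back into $\beta-\tfrac{\beta^2}{2}$ (and simplifying, using $cn\ge1$ to control the error) produces the claimed coefficient.

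The step I expect to be the main obstacle is exactly this last part of (ii): the constraint on $\beta$ is a cubic, and extracting from it the explicit closed-form coefficient in the statement — the one featuring $\sqrt{c(1-c)}$ — requires a careful algebraic reduction (or a deliberate, controlled over-estimate of the cubic root), all while tracking the $\mathcal{O}(n)$ terms arising from non-integrality of $cn$ and of the span thresholds and from the approximation of the power sums. One should also double-check that the extremal span-multiset is indeed the ``saturate spans from below'' one, which is where the caps $n-\ell$ and a short convexity argument enter.
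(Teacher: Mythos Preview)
Your argument for part~(i) is correct and actually more elementary than the paper's: the paper sets up the linear program $\max\sum_\ell m_\ell$ subject to $m_\ell\le n-\ell$ and $\sum_\ell \ell m_\ell\le b$, then bounds it via LP duality with a one-parameter family of dual solutions; you instead bound $\sum_e\ell(e)$ from below by the crude $m^2/(2n)$, which sidesteps the LP entirely and still lands on $m\le n\sqrt{2w}$. Both approaches rest on the same two facts ($m_\ell\le n-\ell$ and $\sum_\ell\ell m_\ell\le(n-1)w$), and your sharpness example is the same band graph the paper uses.

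For part~(ii) your plan is again the primal side of the paper's LP: ``saturate spans from below'' is exactly the greedy primal optimum, and your cubic constraint $\tfrac{\beta^2}{2}-\tfrac{\beta^3}{3}\le c-2c^2+\tfrac{4}{3}c^3$ is the correct one. But here your expectation is off: the stated coefficient is \emph{not} the LP optimum. The paper obtains it by plugging a convenient, non-optimal dual solution (parameter $a=\tfrac12$) into the dual objective, and even remarks afterwards that a better choice of $a$ exists for each $c$. Your exact primal solution gives a strictly tighter bound for $c<\tfrac12$, but its closed form involves a cubic root rather than $\sqrt{c(1-c)}$, so no amount of ``careful algebraic reduction'' will land on the stated expression. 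You would still prove the theorem (a stronger bound implies the weaker one), but completing your route requires either exhibiting the right over-estimate of $\beta^*$---and the one yielding the paper's formula, namely $\beta_0=2\sqrt{c(1-c)}$, is not an obvious guess from the primal side---or verifying directly that $\beta_0-\tfrac{\beta_0^2}{2}$ dominates your $\beta^*-\tfrac{(\beta^*)^2}{2}$, which is an extra inequality to check. The dual approach buys exactly this: a clean closed form at the cost of a small slack.
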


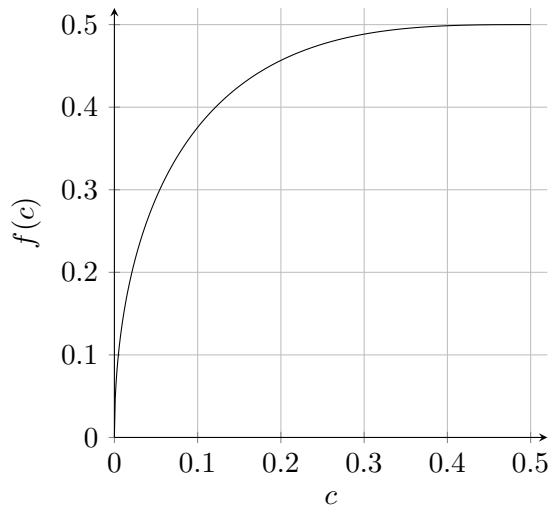
\begin{figure}[h]
\centering
\begin{tikzpicture}
\begin{axis}[
	axis equal image,
	axis lines=left,
	xlabel= \(c\),
	ylabel= \({f(c)}\),
	xmin=0, xmax=0.52,
	ymin=0, ymax=0.52,
	xtick={0, 0.1, 0.2, 0.3, 0.4, 0.5},
	ytick={0, 0.1, 0.2, 0.3, 0.4, 0.5},
	xmajorgrids,
	ymajorgrids
]
\addplot[
	domain=0.000001:0.5,
	samples=500,
	color=black,
]
{(x*(4*x^2 + 4*x*sqrt(x*(1-x)) - 12*x - 4*sqrt(x*(1-x)) + 9)/(6*sqrt(x*(1-x)))};
\end{axis}
\end{tikzpicture}
\caption{
The function $f(c) = \frac{c(4c^2 + 4c\sqrt{c(1-c)} - 12c - 4\sqrt{c(1-c)} + 9)}{6 \sqrt{c(1-c)}}$ from the bound of Theorem~\ref{thm:size}~(ii) on $\left[0, \frac{1}{2}\right]$.
}
\end{figure}

\medskip
For our second result we combine cutwidth and crossings and study theoretical aspects of OSCM[CW].
Let $(G = (A \dot{\cup} B, E), \pi)$ be an instance of OSCM or OSCM[CW], and let $u,v \in B$ be two distinct vertices.
The \emph{crossing number} $c_{u,v}$ denotes the number of crossings involving edges incident to $u$ or $v$ in a linear ordering $\sigma$ of $B$ with $\sigma(u) < \sigma(v)$.
Clearly, the total number of crossings only depends on the crossing number for each unordered pair $\{u, v\} \in {B \choose 2}$.
Thus, for a linear ordering $\sigma$ of $B$, the number of crossings is
\[
\sum_{\substack{u,v \in B \\ \sigma(u) < \sigma(v)}} c_{u,v}.
\]
The case $c_{u,v} = c_{v,u} = 0$ occurs if and only if $u$ or $v$ is isolated, or $u$ and $v$ have a single common neighbor.
If $v$ is isolated, then we may solve the instance $(G-v, \pi|_{V(G)-v})$ and insert $v$ anywhere in the obtained linear ordering of $B-v$ for an optimal solution of the instance $(G,\pi)$.
Likewise, if $u$ and $v$ have identical neighborhoods, then we may solve $(G-v, \pi|_{V(G)-v})$ again and insert $v$ right before or after $u$ in the obtained linear ordering of $B-v$ for an optimal solution of $(G,\pi)$.
Therefore, we may assume throughout the article that, in an instance of OSCM or OSCM[CW], there are no isolated vertices in $B$ and no two vertices in $B$ have identical neighborhoods.
This implies $\max \{c_{u,v}, c_{v,u}\} > 0$.

We say that $\{u, v\} \in {B \choose 2}$ is a \emph{suited} (unordered) pair if $c_{u,v} = 0$ or $c_{v,u} = 0$, otherwise $\{u, v\}$ is \emph{unsuited}.
Dujmovi\'c and Whitesides~\cite{dujmovic2004efficient} proved that, 
if $\{u, v\}$ is suited,
then $\{u, v\}$ appears in its \emph{natural ordering} in any optimal solution $\sigma$:
\begin{itemize}
\item If $c_{u,v} = 0$ and $c_{v,u} > 0$, then $\sigma(u) < \sigma(v)$ is the natural ordering.
\item If $c_{u,v} > 0$ and $c_{v,u} = 0$, then $\sigma(u) > \sigma(v)$ is the natural ordering.
\end{itemize}
Therefore, if $\{u, v\}$ is a suited pair, we immediately know their position relative to each other in any optimal solution $\sigma$, and an exact algorithm for OSCM or OSCM[CW] only has to determine the relative positions of all unsuited pairs in an optimal linear ordering $\sigma$.
Hence, the number of unsuited pairs is of interest, the trivial bound being $|B| \choose 2$.
In Theorem~\ref{thm:unsuited} we present a tight bound on the number of unsuited pairs in terms of $|B|$ and the cutwidth $w$ of $G$.
\begin{theorem}[restate=thmunsuited]\label{thm:unsuited}
If $(G = (A \dot{\cup} B, E), \pi)$ is an instance of \textup{OSCM[CW]}, and $w$ is the cutwidth of $G$ with respect to $\pi$,
then the number of unsuited pairs is upper bounded by
\[
|B|(w-1) - {w \choose 2},
\]
and this bound is sharp.
\end{theorem}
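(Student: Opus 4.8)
The plan is to turn ``unsuited'' into a purely geometric overlap condition on intervals read off from $\pi$, and then to bound the number of overlapping pairs by counting edges in an interval graph whose clique number is controlled by the cutwidth. First I would introduce, for each $b\in B$, the positions $\ell(b)=\min\{\pi(a):a\in N(b)\}$ and $r(b)=\max\{\pi(a):a\in N(b)\}$ of its leftmost and rightmost neighbour. Unwinding the definition of $c_{u,v}$: with $\sigma(u)<\sigma(v)$, an edge at $u$ and an edge at $v$ cross exactly when the $A$-endpoint of the former is $\pi$-larger than that of the latter, so $c_{u,v}$ counts the pairs $(a,a')\in N(u)\times N(v)$ with $a\ne a'$ and $\pi(a)>\pi(a')$. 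A short case distinction -- the only delicate case being $r(u)=\ell(v)$, where that extreme neighbour is forced to be common to $u$ and $v$ -- then yields $c_{u,v}=0\iff r(u)\le\ell(v)$. Hence $\{u,v\}$ is unsuited if and only if $\ell(u)<r(v)$ and $\ell(v)<r(u)$.

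Next I would attach to each $b\in B$ the nonempty integer interval $I_b=\bigl[\min\{\pi(b),\ell(b)\},\ \max\{\pi(b),r(b)\}\bigr)$. The point is that $b$ has an incident edge crossing the cut of $\pi$ just after position $i$ precisely when $i\in I_b$: if $\pi(b)\le i$ then $i\in I_b$ forces $r(b)>i$, so the edge to the rightmost neighbour of $b$ crosses; symmetrically if $\pi(b)>i$; and the converse is immediate. Since edges at distinct vertices of $B$ are distinct and at most $w$ edges cross any cut, at most $w$ of the intervals $I_b$ contain any given position. Furthermore, by the reformulation an unsuited pair $\{u,v\}$ satisfies $\min\{\pi(u),\ell(u)\}\le\ell(u)<r(v)\le\max\{\pi(v),r(v)\}$ together with the symmetric inequality, which forces $I_u\cap I_v\ne\emptyset$. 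Thus the number of unsuited pairs is at most the number of intersecting pairs among $\{I_b\}_{b\in B}$, i.e.\ the number of edges of the interval graph $H$ on vertex set $B$ defined by these intervals; by the Helly property of intervals, $H$ is chordal with $\omega(H)\le w$. To close the upper bound I would then use the standard fact that an interval graph on $|B|$ vertices with clique number at most $w$ has at most $|B|(w-1)-\binom{w}{2}$ edges: scanning the intervals by nondecreasing left endpoint, the $k$-th interval is adjacent only to earlier intervals containing its left endpoint, of which together with itself there are at most $\min\{k,w\}$, so it contributes at most $\min\{k,w\}-1$ edges, and $\sum_{k=1}^{|B|}(\min\{k,w\}-1)=|B|(w-1)-\binom{w}{2}$ (taking $w\le|B|$, which we may assume).

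For the matching lower bound I would exhibit, for $N$ sufficiently large, the instance with $A=\{a_1,\dots,a_{N+w}\}$, $B=\{b_1,\dots,b_N\}$, $N(b_k)=\{a_k,a_{k+w}\}$, and $\pi=(a_1,b_1,a_2,b_2,\dots,a_N,b_N,a_{N+1},\dots,a_{N+w})$. No vertex of $B$ is isolated and no two vertices of $B$ have the same neighbourhood; a direct computation shows that every cut of $\pi$ far enough from both ends is crossed by exactly $w$ edges and no cut by more, so the cutwidth of $G$ with respect to $\pi$ equals $w$; and because $\pi$ lists the $a_i$ in index order, the reformulation gives that $\{b_j,b_k\}$ is unsuited if and only if $|j-k|\le w-1$. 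Hence the number of unsuited pairs is exactly $\sum_{d=1}^{w-1}(N-d)=N(w-1)-\binom{w}{2}=|B|(w-1)-\binom{w}{2}$, so the bound is attained.

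I expect the two fiddly points to be these. In the upper bound, one must pin down $I_b$ correctly: it has to include $\pi(b)$ itself, not merely the neighbour-span $[\ell(b),r(b)]$, so that degree-one vertices of $B$ (which may still belong to unsuited pairs) and shared-neighbour configurations are handled uniformly, and so that ``$I_b\ni i$'' matches ``$b$ contributes to the $i$-th cut'' exactly. In the sharpness direction, the real work is the cut computation verifying that the construction has cutwidth exactly $w$ and not smaller, which is why the ``bulk'' region of $\pi$ must be non-degenerate and hence $N$ cannot be too small. Everything else -- the case analysis in the reformulation and the telescoping sum -- is routine.
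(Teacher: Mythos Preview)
Your proof is correct and takes a genuinely different, more structural route than the paper's. The paper argues directly: ordering $B$ by $\pi|_B$, it shows that for each $u\in B$ the number of later unsuited partners $v$ (those with $\pi(v)>\pi(u)$) is at most $w-1$, because any such $v$ must have an edge in the cut $E_k$ at $k\approx\pi(u)$, and the edge from $u$ to its rightmost neighbour already occupies one slot of that cut; a trivial refinement for the last $w$ positions of $B$ then yields the sum $|B|(w-1)-\binom{w}{2}$. You instead first prove the clean equivalence $c_{u,v}=0\iff r(u)\le\ell(v)$, deduce that ``unsuited'' is an interval-overlap condition, and then pass to an interval graph whose clique number is bounded by $w$ via the observation that $i\in I_b$ exactly when $b$ contributes an edge to $E_i$; the bound then drops out of the standard edge count for interval graphs with bounded clique number. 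Your approach buys a reusable characterisation of unsuited pairs and makes the role of cutwidth transparent (it literally is the clique number of an auxiliary interval graph), at the price of a slightly longer setup; the paper's argument is shorter and avoids the auxiliary graph, but the two are close in spirit since your left-endpoint scan and the paper's $\pi$-scan both amount to bounding a back-degree by $w-1$. Your sharpness construction is essentially the same as the paper's (each $b_k$ adjacent to $a_k$ and $a_{k+w}$, with $\pi$ interleaving $A$ and $B$); the paper's figure differs only cosmetically in that one vertex of $B$ has degree~$1$.
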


Dobler~\cite{dobler2023notecomplexityonesidedcrossing} proved that OSCM is \NP-hard on disjoint unions of 4-stars, that is, on graphs of cutwidth at most 2.
Therefore there is no \FPT-algorithm for OSCM[CW] unless $\P = \NP$.
Nevertheless, we can improve the trivial $\mathcal{O}^*(2^{|B| \choose 2})$-time algorithm that enumerates all relative positions of unordered pairs in $B$. 
Corollary~\ref{cor:unsuited} is an immediate consequence of Theorem~\ref{thm:unsuited}.
We use the $\mathcal{O}^*$-notation to suppress polynomial factors in the $\mathcal{O}$-notation.

\begin{corollary}[restate=corunsuited]\label{cor:unsuited}
An instance $(G = (A \dot{\cup} B, E), \pi)$ of \textup{OSCM[CW]}, where $G$ has cutwidth $w$ with respect to $\pi$, is solvable in $\mathcal{O}^*(2^{|B|w})$-time.
\end{corollary}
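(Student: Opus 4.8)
The plan is to derive this directly from Theorem~\ref{thm:unsuited} together with the Dujmovi\'c--Whitesides characterisation of suited pairs recalled above. First I would run a polynomial-time preprocessing phase: compute the crossing number $c_{u,v}$ for every ordered pair $u,v\in B$ (each is a sum over pairs of edges), and use these values to split ${B\choose 2}$ into the set $S$ of suited pairs and the set $U$ of unsuited pairs. By the standing assumption that $\max\{c_{u,v},c_{v,u}\}>0$ for all pairs, every pair in $S$ has a well-defined natural ordering.

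Call a linear ordering $\sigma$ of $B$ \emph{feasible} if it orders every suited pair according to its natural ordering. Two observations drive the algorithm. By Dujmovi\'c--Whitesides, every optimal ordering is feasible. Moreover, for a feasible $\sigma$ each suited pair contributes $0$ to $\sum_{\sigma(u)<\sigma(v)} c_{u,v}$, so the number of crossings of $\sigma$ equals $\sum_{\{u,v\}\in U,\ \sigma(u)<\sigma(v)} c_{u,v}$; in particular it is determined by the way $\sigma$ orders the pairs in $U$ alone.

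The algorithm then enumerates all $2^{|U|}$ orientations of the pairs in $U$. Each such orientation, combined with the (fixed) natural orderings of the pairs in $S$, yields a tournament on $B$; if this tournament is transitive it defines a unique feasible linear ordering $\sigma$, for which we compute $\sum_{\{u,v\}\in U,\ \sigma(u)<\sigma(v)} c_{u,v}$ in polynomial time, and otherwise we discard it. We return the minimum value obtained, together with a corresponding ordering. Correctness is immediate: any optimal $\sigma^{\star}$ is feasible, so its orientation of $U$ is among those enumerated, the associated tournament is transitive (being realised by $\sigma^{\star}$), and the value recorded for it equals the crossing number of $\sigma^{\star}$; conversely every recorded value is the crossing number of a genuine linear ordering, so the minimum we return is optimal. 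For the running time, Theorem~\ref{thm:unsuited} gives $|U|\le |B|(w-1)-{w\choose 2}\le |B|w$, hence at most $2^{|B|w}$ iterations, each taking polynomial time, for a total of $\mathcal{O}^{*}(2^{|B|w})$.

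The only point that needs a little care — and the closest thing to an obstacle — is that a guessed orientation of the unsuited pairs need not extend to any linear ordering of $B$; this is handled cleanly by the transitivity test, and nothing else is non-routine, since the bound on $|U|$ is exactly what Theorem~\ref{thm:unsuited} supplies.
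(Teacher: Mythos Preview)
Your proposal is correct and follows essentially the same approach as the paper: enumerate all $2^{|U|}$ relative orderings of the unsuited pairs, rely on Dujmovi\'c--Whitesides to fix the suited pairs, and invoke Theorem~\ref{thm:unsuited} for the bound $|U|\le |B|w$. The paper states the corollary as an immediate consequence without spelling out the transitivity check or the correctness argument, so your write-up is in fact more detailed than theirs.
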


\medskip
In our third result we shed light on structural properties of crossings.
Let $(G = (A \dot{\cup} B, E), \pi)$ be an instance of OSCM.
Recall that we may assume $\max\{c_{u,v},c_{v,u}\} > 0$ for every two distinct vertices $u, v \in B$.
Since a linear ordering $\sigma$ of $B$ is uniquely determined by the relative position of each unordered pair of vertices in $B$, we can express $\sigma$ as a transitive tournament.
More explicitly, $\sigma$ can be encoded by a directed graph $D = (B, \overrightarrow{E})$, where $(u,v) \in \overrightarrow{E}$ if and only if $\sigma(u) < \sigma(v)$ for every two distinct vertices $u,v \in B$.
Reversely, a topological ordering of $D$ gives the linear ordering $\sigma$ again.

\begin{algorithm}[h]
\DontPrintSemicolon
\KwIn{An instance $(G = (A \dot{\cup} B, E), \pi)$ of OSCM.}
\KwOut{A linear ordering $\sigma$ of $B$.}
$D \gets (B, \emptyset)$\;
\ForAll{$\{u,v\} \in {B \choose 2}$ in non-decreasing order with respect to $\frac{\min\{c_{u,v}, c_{v,u}\}}{\max\{c_{u,v},c_{v,u}\}}$}{
	\If{neither $(u,v)$ nor $(v,u)$ is in the transitive closure of $D$}{
		add the arc $(u,v)$ to $D$ if $c_{u,v} \leq c_{v,u}$, otherwise add the arc $(v,u)$ to $D$\;
	}
}
output a topological ordering of $D$\;
\caption{\textsc{Greedy}.}\label{alg:greedy}
\end{algorithm}

Consider Algorithm~\ref{alg:greedy}.
We ask the following.
\begin{question}\label{q:greedy}
Is \textsc{Greedy} a constant-factor approximation algorithm of OSCM?
\end{question}
\noindent 
Closely related to Question~\ref{q:greedy} is Question~\ref{q:cycle}.
\begin{question}\label{q:cycle}
Let $v_1, v_2, \dots, v_t \in B$ be $t \geq 3$ distinct vertices with nonempty neighborhoods.
Assume that the inequality $c_{v_i,v_{i+1}} \leq c_{v_{i+1},v_i}$ holds for every $i \in [t]$, where $v_{t+1}=v_1$.
Note that $c_{v_{i+1},v_i} > 0$ holds for every $i \in [t]$.
What is the largest $c_t \in [0,1]$ such that
\[
\max \left\{ \frac{c_{v_1,v_2}}{c_{v_2,v_1}}, \frac{c_{v_2,v_3}}{c_{v_3,v_2}}, \dots, \frac{c_{v_t,v_1}}{c_{v_1,v_t}} \right\} \geq c_t?
\]
\end{question}

If there is even a constant $c > 0$ such that $c_t \geq c$ for all $t \geq 3$, then \textsc{Greedy} is an $\frac{1}{c}$-factor approximation algorithm.
Thus, the answer to Question~\ref{q:greedy} is ``yes'' in this case.
To see this, let $D = (B, \overrightarrow{E})$ denote the digraph after the execution of \textsc{Greedy}.
Let $\sigma$ be a linear ordering of $B$ obtained from a topological ordering of $D$.
Let $\operatorname{OPT}(G, \pi)$ denote the optimal objective value of the OSCM instance $(G, \pi)$.
Observe that $\sum_{\{u,v\} \in {B \choose 2}} \min\{c_{u,v}, c_{v,u}\}$ is a lower bound on $\operatorname{OPT}(G,\pi)$.
Therefore, if $c_{u,v} \leq c_{v,u}$ implies $\sigma(u) < \sigma(v)$ for every unordered pair $\{u, v\} \in {B \choose 2}$, then $\sigma$ is an optimal solution.
If $\sigma(u) < \sigma(v)$ and $c_{u,v} > c_{v,u}$ hold for a pair $\{u, v\} \in {B \choose 2}$, then $(u,v) \notin \overrightarrow{E}$ since \textup{Greedy} never adds an arc $(u,v)$ with $c_{u,v} > c_{v,u}$, and $(v,u) \notin \overrightarrow{E}$ since this would contradict $\sigma(u) < \sigma(v)$, 
but there is a directed $u$-$v$-path $v_1v_2 \dots v_k$ in $D$ with $v_1 = u$, $v_k = v$ and $k \geq 3$.
Since $(v_1,v_2), \dots, (v_{k-1},v_k) \in \overrightarrow{E}$, the crossing numbers of the unordered pairs in $\{v_1, v_2, \dots, v_k\}$ fulfill the assumptions of Question~\ref{q:cycle}, and
\[
\frac{c_{v,u}}{c_{u,v}} = \frac{c_{v_k,v_1}}{c_{v_1,v_k}} = \max \left\{ \frac{c_{v_1,v_2}}{c_{v_2,v_1}}, \frac{c_{v_2,v_3}}{c_{v_3,v_2}}, \dots, \frac{c_{v_k,v_1}}{c_{v_1,v_k}} \right\} \geq c,
\]
since the unordered pairs $\{v_i,v_{i+1}\}$ for $i \in [k-1]$ are considered before $\{u,v\} = \{v_1,v_k\}$ in \textsc{Greedy}. 
Thus,
\[
\sum_{\substack{u,v \in B \\ \sigma(u) < \sigma(v)}} c_{u,v} \leq \frac{1}{c} \sum_{\{u,v\} \in {B \choose 2}} \min\{c_{u,v}, c_{v,u}\} \leq \frac{1}{c} \operatorname{OPT}(G, \pi),
\]
meaning that \textsc{Greedy} is indeed a $\frac{1}{c}$-factor approximation algorithm in this case.

Theorem~\ref{thm:cycle} gives lower bounds on the constants $c_t$ in Question~\ref{q:cycle}.

\begin{theorem}[restate=thmcycle]\label{thm:cycle}
Let $(G = (A \,\dot{\cup}\, B, E), \pi)$ be an instance of \textup{OSCM},
and let $v_1, v_2, \dots, v_t \in B$ be $t \geq 3$ distinct vertices that
\begin{enumerate}[(i)]
\item have nonempty neighborhoods, and
\item fulfill the inequality $c_{v_i,v_{i+1}} \leq c_{v_{i+1},v_i}$ for every $i \in [t]$, where $v_{t+1} = v_1$.
\end{enumerate}
Then
\[
\max \left\{ \frac{c_{v_1,v_2}}{c_{v_2,v_1}}, \frac{c_{v_2,v_3}}{c_{v_3,v_2}}, \dots, \frac{c_{v_t,v_1}}{c_{v_1,v_t}} \right\} \geq \frac{1}{t-1}.
\]
\end{theorem}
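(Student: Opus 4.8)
The plan is to renormalize the crossing numbers so that the ratios in the statement acquire a common denominator and obey a triangle inequality, and then to finish with a short averaging argument around the cycle.

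For vertices $u,v\in B$ with nonempty neighborhoods put $\hat c_{u,v}:=\frac{c_{u,v}}{|N(u)|\,|N(v)|}$; equivalently, $\hat c_{u,v}=\Pr[\pi(a)>\pi(b)]$ when $a\in N(u)$ and $b\in N(v)$ are picked independently and uniformly at random. Since $|N(v_i)|\,|N(v_{i+1})|$ is a common factor of $c_{v_i,v_{i+1}}$ and $c_{v_{i+1},v_i}$, hypothesis~(ii) is equivalent to $\hat c_{v_i,v_{i+1}}\le\hat c_{v_{i+1},v_i}$ and every ratio is unchanged: $\frac{c_{v_i,v_{i+1}}}{c_{v_{i+1},v_i}}=\frac{\hat c_{v_i,v_{i+1}}}{\hat c_{v_{i+1},v_i}}$. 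The first and key step is to prove that, for all $u,v,w\in B$ with nonempty neighborhoods,
\[
\hat c_{u,w}\ \le\ \hat c_{u,v}+\hat c_{v,w}.
\]
Clearing denominators this reads $c_{u,w}\,|N(v)|\le c_{u,v}\,|N(w)|+c_{v,w}\,|N(u)|$, and I would prove it by counting triples $(a,b,c)\in N(u)\times N(w)\times N(v)$: the left-hand side counts those with $\pi(a)>\pi(b)$, whereas $c_{u,v}\,|N(w)|$ counts those with $\pi(a)>\pi(c)$ and $c_{v,w}\,|N(u)|$ counts those with $\pi(c)>\pi(b)$; since $\pi(a)>\pi(b)$ forces $\pi(a)>\pi(c)$ or $\pi(c)>\pi(b)$ (if $\pi(a)\le\pi(c)$ then $\pi(c)\ge\pi(a)>\pi(b)$), every triple counted on the left is counted on the right. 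Probabilistically, this is the union bound applied to the event inclusion $\{X_u>X_w\}\subseteq\{X_u>X_v\}\cup\{X_v>X_w\}$, where $X_x$ is the position of a uniformly random neighbor of $x$ and the three draws are independent.

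Next I would contract the cycle. Fix $j\in[t]$ and apply the triangle inequality $t-2$ times along the path $v_{j+1},v_{j+2},\dots,v_{j}$ that runs the long way around (all indices read modulo $t$); its $t-1$ consecutive pairs are exactly the pairs $(v_i,v_{i+1})$ with $i\in[t]\setminus\{j\}$, so
\[
\hat c_{v_{j+1},v_j}\ \le\ \sum_{i\in[t]\setminus\{j\}}\hat c_{v_i,v_{i+1}}\ =\ S-\hat c_{v_j,v_{j+1}},\qquad S:=\sum_{i=1}^{t}\hat c_{v_i,v_{i+1}}.
\]
Suppose, for contradiction, that $\frac{c_{v_i,v_{i+1}}}{c_{v_{i+1},v_i}}<\frac1{t-1}$ for every $i\in[t]$. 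By the standing assumptions $\max\{c_{v_j,v_{j+1}},c_{v_{j+1},v_j}\}>0$, which together with~(ii) gives $\hat c_{v_{j+1},v_j}>0$, so the assumed bound is equivalent to $(t-1)\,\hat c_{v_j,v_{j+1}}<\hat c_{v_{j+1},v_j}$. Combining this with the displayed inequality yields $t\,\hat c_{v_j,v_{j+1}}<S$ for every $j\in[t]$, and summing over $j$ gives $tS<tS$, a contradiction. Hence $\max_i\frac{c_{v_i,v_{i+1}}}{c_{v_{i+1},v_i}}\ge\frac1{t-1}$, as claimed.

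I expect the only genuinely nonroutine point to be discovering and proving the triangle inequality for $\hat c$; it is precisely this renormalization that makes the additive threshold $\frac1{t-1}$ appear. Everything after it — the path contraction and the averaging step — is automatic, and since $tS<tS$ is impossible for every $S\ge0$, the degenerate case $S=0$ needs no separate treatment.
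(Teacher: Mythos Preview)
Your proof is correct. The paper's argument and yours rest on the same normalization (dividing $c_{v_i,v_{i+1}}$ by $d_i d_{i+1}$, which is exactly what the factors $D_{i,i+1}=\prod_{k\ne i,i+1}d_k$ in the paper accomplish) and on the same combinatorial fact that positions of neighbors cannot strictly decrease all the way around a cycle, but they package it differently. The paper works on the full product space $N(v_1)\times\cdots\times N(v_t)$ at once: for every tuple $(u_1,\dots,u_t)$ with at least two distinct positions, the cyclic sum of forward indicators is $\ge 1$ and that of backward indicators is $\le t-1$; summing over all tuples yields two global inequalities whose combination contradicts the hypothesis. You instead isolate a reusable lemma---the triangle inequality $\hat c_{u,w}\le\hat c_{u,v}+\hat c_{v,w}$, which is essentially the three-vertex instance of the same cyclic observation---and then chain it $t-2$ times around the long arc before the averaging step. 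Your route has the advantage of producing a clean quasi-metric statement that could be cited independently; the paper's single sum over $t$-tuples avoids the chaining and handles all $t$ in one stroke.
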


If we impose disjoint neighborhoods, we are able to identify the largest possible $c_3$.
\begin{theorem}[restate=thmthreecycle]\label{thm:3cycle}
Let $(G = (A \,\dot{\cup}\, B, E), \pi)$ be an instance of \textup{OSCM},
and let $v_1, v_2, v_3 \in B$ be three distinct vertices that
\begin{enumerate}[(i)]
\item have nonempty, pairwise disjoint neighborhoods, and
\item fulfill the inequalities $c_{v_1,v_2} \leq c_{v_2,v_1}$, $c_{v_2,v_3} \leq c_{v_3,v_2}$, and $c_{v_3,v_1} \leq c_{v_1,v_3}$.
\end{enumerate}
Then
\[
\max \left\{ \frac{c_{v_1,v_2}}{c_{v_2,v_1}}, \frac{c_{v_2,v_3}}{c_{v_3,v_2}}, \frac{c_{v_3,v_1}}{c_{v_1,v_3}} \right\} \geq \phi-1,
\]
where $\phi = \frac{1 + \sqrt{5}}{2}$ is the golden ratio.
This bound is sharp.
\end{theorem}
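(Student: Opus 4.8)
Put $d_i=|N(v_i)|\ge 1$. By hypothesis (i) the three neighbourhoods are disjoint, so each pair consisting of a neighbour of $v_i$ and a neighbour of $v_j$ contributes to exactly one of $c_{v_i,v_j},c_{v_j,v_i}$; hence $c_{v_i,v_j}+c_{v_j,v_i}=d_id_j$. Set $p_{ij}=c_{v_i,v_j}/(d_id_j)\in[0,1]$, so $p_{ij}+p_{ji}=1$ and $c_{v_i,v_j}/c_{v_j,v_i}=p_{ij}/(1-p_{ij})$; hypothesis (ii) forces each denominator $c_{v_{i+1},v_i}$ to be positive, so the three ratios are well defined, and since $x\mapsto x/(1-x)$ is strictly increasing on $[0,1)$, the left-hand side of the theorem equals $M/(1-M)$ with $M=\max\{p_{12},p_{23},p_{31}\}\le\tfrac12$. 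Using the identities $1/\phi=\phi-1$ and $(\phi-1)^2=2-\phi$ one gets $M/(1-M)\ge\phi-1\iff M\ge 2-\phi$, so it suffices to prove $M\ge 2-\phi$. Finally, choosing a uniformly random neighbour $a_i$ of $v_i$, independently for $i=1,2,3$, one has $p_{ij}=\Pr[\pi(a_i)>\pi(a_j)]$ (no ties occur, by (i)). Thus the theorem reduces to the claim that for any independent real random variables $a_1,a_2,a_3$,
\[
\max\bigl\{\Pr[a_1>a_2],\ \Pr[a_2>a_3],\ \Pr[a_3>a_1]\bigr\}\ \ge\ 2-\phi,
\]
equivalently: the three ``forward'' probabilities $z_i:=\Pr[a_i<a_{i+1}]$ (indices read cyclically) cannot all exceed $\phi-1$.

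\textbf{The extremal inequality (the hard part).}
I would argue by contradiction, assuming $z_1,z_2,z_3>\phi-1$, and work with the distribution functions $F_i$ of $a_i$. The basic estimates are pointwise in the threshold: for every $t$, independence of $\{a_i<t\}$ and $\{a_j>t\}$ together with $\{a_i<t\}\cap\{a_j>t\}\subseteq\{a_i<a_j\}$ gives $F_i(t)\bigl(1-F_j(t)\bigr)\le z$ for the appropriate $z$, and passing to the complementary event gives $F_j(t)-F_i(t)\le 1-z$. On their own these are far too weak (they do not even rule out $z_1=z_2=z_3=\tfrac23$); the idea is to feed all of them into the monotone curve $t\mapsto\bigl(F_1(t),F_2(t),F_3(t)\bigr)$, which runs from $(0,0,0)$ to $(1,1,1)$. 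Choosing an increasing sequence of thresholds at which, say, $F_1$ attains prescribed levels, and chaining the estimates through the monotonicity $F_i(t)\le F_i(t')$ for $t\le t'$, should produce a self-referential inequality on the coordinates of the curve whose borderline (equality) case is governed by the quadratic $x^2+x=1$. Its only nonnegative root being $\phi-1$ is precisely what prevents all three $z_i$ from exceeding $\phi-1$. Getting the chaining to deliver the constant $\phi-1$ exactly, rather than a weaker constant such as the trivial $1/2$ that follows from the crude bound $z_1+z_2+z_3\le 2$, is the main obstacle; I expect it to be cleanest in the tie-free, monotone-curve (equivalently, quantile-function) formulation.

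\textbf{Sharpness.}
For the matching family I would construct, for each $n$, a bipartite instance in which the neighbourhoods of $v_1,v_2,v_3$ are three pairwise disjoint point sets whose $\pi$-positions interleave in a self-similar, Fibonacci-type pattern, so that the inversion counts between consecutive neighbourhoods are ratios of consecutive Fibonacci numbers. As $n\to\infty$ this keeps $c_{v_i,v_{i+1}}\le c_{v_{i+1},v_i}$ for all $i$ (so hypothesis (ii) holds) while $c_{v_i,v_{i+1}}/c_{v_{i+1},v_i}\to\phi-1$ for each $i$, matching the lower bound; the work there is verifying (ii) and computing the three limiting ratios.
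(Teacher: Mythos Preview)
Your reduction is correct and, in fact, illuminating: after normalising by $d_id_j$ you have recast the statement as the classical Steinhaus--Tryba\l a inequality on intransitive dice, namely that for independent real random variables $a_1,a_2,a_3$ one cannot have $\Pr[a_1<a_2]$, $\Pr[a_2<a_3]$, $\Pr[a_3<a_1]$ all exceed $\phi-1$. Had you simply cited that result, the proof would be complete.

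The gap is that you do not prove this inequality. Your ``hard part'' paragraph is a plan, not an argument: the pointwise bounds $F_i(t)(1-F_j(t))\le z$ and $F_j(t)-F_i(t)\le 1-z$ are correct, but you yourself note that they are ``far too weak'' on their own, and the promised chaining through a monotone curve is left entirely unspecified. You explicitly flag that obtaining $\phi-1$ rather than a weaker constant is ``the main obstacle,'' and nothing in the sketch explains how the golden-ratio quadratic $x^2+x=1$ would emerge from the chaining. As written, this part is not a proof and not close to one.

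By contrast, the paper proceeds by direct optimisation rather than via a probabilistic reformulation. It fixes the $d$ neighbours of $v_2$, records how many neighbours of $v_3$ (respectively $v_1$) fall into each of the $d{+}1$ induced gaps, normalises, and arrives at a continuous bilinear minimisation problem with two linear side constraints. The key step (their Claim~1) is a support-reduction argument showing that an optimal solution has at most two positive coordinates on each side; this is precisely the ``reduce to two-point distributions'' step that is also the standard route to the Steinhaus--Tryba\l a bound. After that, a finite case analysis in the four index parameters yields $2-\phi$ exactly, and the extremal configurations drop out of the equality cases. So the two approaches meet at the same technical core (support reduction), but the paper actually carries it out, whereas your CDF-chaining outline does not even reach that point. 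For sharpness, your Fibonacci idea is in the right spirit, but again only a heuristic; the paper reads off explicit extremal configurations from the equality cases of its optimisation.
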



\medskip
The article is structured as follows.
In Section~\ref{sec:pre} we define all required notions.
In Section~\ref{sec:size} we prove Theorem~\ref{thm:size}.
In Section~\ref{sec:unsuited} we prove Theorem~\ref{thm:unsuited} and Corollary~\ref{cor:unsuited}.
In Section~\ref{sec:cycle} we prove Theorem~\ref{thm:cycle}.
At last, in Section~\ref{sec:conclusion}, we give a conclusion.

\section{Preliminaries}\label{sec:pre}
Let $G$ be a graph of order $n$ and size $m$.
A \emph{linear ordering} of a finite set $S = \{s_1, s_2, \dots, s_k\}$ is a bijection $\pi: S \rightarrow [k]$, which we often write as $\pi = s_1s_2 \dots s_k$, where $\pi(i) = s_i$ for all $i \in [k]$.
Given a fixed linear ordering $\pi = v_1v_2 \dots v_n$ of the vertices of $G$ and $k \in [n-1]$, the \emph{linear (edge-)cut} $E_k$ is the set of edges of $G$ having one end in $\{v_1, \dots, v_k\}$ and the other end in $\{v_{k+1}, \dots, v_n\}$.
Note that $E_k$ depends on $\pi$, but since the linear ordering is fixed in the subsequent sections, we do not explicitly state this dependence.
The cutwidth of $G$ with respect to $\pi$ is $\max_{k \in [n-1]} |E_k|$.
The \emph{cutwidth} of $G$ is the smallest integer $w$ with the property that there exists a linear ordering $\pi$ of the vertices of $G$ such that the cutwidth of $G$ with respect to $\pi$ is at most $w$.
In this case, we say that $\pi$ \emph{attains} the cutwidth $w$ of $G$.

Let $(G, \pi)$ be an arbitrary fixed instance of OSCM or OSCM[CW].
For two distinct vertices $u, v \in B$, let the \emph{crossing number} $c_{u,v}$ denote the number of crossings that edges incident to $u$ make with edges incident to $v$ in a linear ordering $\sigma$ of $B$ with $\sigma(u) < \sigma(v)$.
Clearly, the number of crossings only depends on the relative positions between all pairs of distinct vertices of $G$.
Therefore, the number of crossings for a linear ordering $\sigma$ of $B$ is given by
\[
\sum_{\substack{u,v \in B \\ \sigma(u) < \sigma(v)}} c_{u,v}.
\]

\section{Proof of Theorem~\ref{thm:size}}\label{sec:size}
Throughout Section~\ref{sec:size}, 
let $G$ be a graph of order $n$, size $m$, and cutwidth $w$,
and let $\pi = v_1v_2 \dots v_n$ be a fixed linear ordering of the vertices of $G$ that attains the cutwidth $w$ of $G$.
We prove that the statement for Theorem~\ref{thm:size} holds for $G$.

We begin with two auxiliary lemmas.
For $\ell \in [n-1]$, let $m_\ell$ denote the number of edges $v_iv_j \in E(G)$ with $|i-j|=\ell$.
\begin{lemma}\label{lem:up_mk}
For every $\ell \in [n-1]$, $m_\ell \leq n-\ell$.
\end{lemma}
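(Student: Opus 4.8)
\textbf{Proof proposal for Lemma~\ref{lem:up_mk}.}

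The plan is a direct counting argument; cutwidth plays no role here, only the fixed linear ordering $\pi = v_1 v_2 \dots v_n$. Fix $\ell \in [n-1]$. Every edge $v_i v_j \in E(G)$ counted by $m_\ell$ satisfies $|i - j| = \ell$, so after relabelling the endpoints we may assume $i < j$, which forces $j = i + \ell$. For the index $j = i + \ell$ to lie in $[n]$ we need $1 \leq i \leq n - \ell$. Thus every such edge is of the form $v_i v_{i+\ell}$ with $i \in \{1, 2, \dots, n-\ell\}$, and distinct edges of this type correspond to distinct values of $i$. Hence $m_\ell \leq n - \ell$.

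There is essentially no obstacle: the statement is an upper bound on the number of pairs of vertices at position-distance exactly $\ell$ in the ordering, and there are precisely $n - \ell$ such pairs $\{v_i, v_{i+\ell}\}$, so at most that many of them can be edges. I would state it in one or two sentences as above, with no case analysis needed. The only thing to be careful about is the boundary: for $\ell = n-1$ the bound reads $m_{n-1} \leq 1$, corresponding to the single possible "long" edge $v_1 v_n$, which is consistent.

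If one wanted to phrase this without the relabelling step, an equivalent formulation is: the map sending an edge $v_i v_j$ with $j - i = \ell$ (taking $i < j$) to its smaller index $i$ is an injection from the set of edges counted by $m_\ell$ into $\{1, \dots, n-\ell\}$, so $m_\ell \leq n - \ell$. This is the version I would write, since it makes the injectivity explicit and keeps the argument to a single line.
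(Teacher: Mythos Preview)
Your argument is correct and is exactly the content behind the paper's one-line proof, which simply states that the bound ``follows immediately from the linear ordering of the vertices.'' You have just made explicit the obvious counting that the paper leaves implicit.
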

\begin{proof}
This follows immediately from the linear ordering of the vertices.
\end{proof}
\begin{lemma}\label{lem:ub_sum_trivial}
$\sum_{\ell=1}^{n-1} \ell m_\ell \leq (n-1)w$.
\end{lemma}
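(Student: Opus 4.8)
The plan is to prove the inequality by a simple double-counting argument, writing $\sum_{\ell=1}^{n-1}\ell m_\ell$ as the total size $\sum_{k=1}^{n-1}|E_k|$ of all linear cuts and then bounding each cut by the cutwidth $w$.

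First I would observe that an edge $v_iv_j\in E(G)$ with $i<j$ belongs to the linear cut $E_k$ exactly when $i\le k\le j-1$, i.e. for precisely $j-i=|i-j|$ values of $k$. Summing over all edges therefore gives
\[
\sum_{k=1}^{n-1}|E_k| \;=\; \sum_{v_iv_j\in E(G)}|i-j| \;=\; \sum_{\ell=1}^{n-1}\ell\, m_\ell,
\]
where the last equality just groups the edges according to the value $\ell=|i-j|$, of which there are $m_\ell$ by definition.

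Next I would use that $\pi$ has cutwidth $w$, so $|E_k|\le w$ for every $k\in[n-1]$, hence $\sum_{k=1}^{n-1}|E_k|\le (n-1)w$. Combining this with the identity above yields
\[
\sum_{\ell=1}^{n-1}\ell\, m_\ell \;=\; \sum_{k=1}^{n-1}|E_k| \;\le\; (n-1)w,
\]
which is the claim.

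There is no real obstacle here: the only thing to be careful about is the indexing in the first step (that an edge spanning positions $i$ and $j$ lies in exactly $j-i$ of the cuts $E_1,\dots,E_{n-1}$), and this follows directly from the definition of the linear cut $E_k$ as the set of edges with one end in $\{v_1,\dots,v_k\}$ and the other in $\{v_{k+1},\dots,v_n\}$. The bound is of course not tight in general, which is consistent with the lemma being called "trivial"; the sharper estimates will presumably come from combining it with Lemma~\ref{lem:up_mk}.
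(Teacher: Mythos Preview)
Your proof is correct and follows exactly the same double-counting argument as the paper: rewrite $\sum_\ell \ell m_\ell$ as $\sum_k |E_k|$ by observing that an edge $v_iv_j$ lies in precisely $|i-j|$ of the linear cuts, and then bound each $|E_k|$ by $w$.
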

\begin{proof}
Let $v_iv_j$ be an edge of $G$, and let $\ell = |i-j|$.
The edge $v_iv_j$ is in exactly $\ell$ linear cuts $E_k$.
Therefore,
\[
\sum_{\ell=1}^{n-1} \ell m_\ell = \sum_{k=1}^{n-1} |E_k| \leq (n-1)w.
\]
\end{proof}

The upper bound of Lemma~\ref{lem:ub_sum_trivial} can be improved whenever we are considering a graph class for which $w = \Theta(n^2)$.
The idea is that, in this case, the cutwidth $w$ is attained by a linear cut $E_k$ ``somewhere in the middle'' for some $k \approx \frac{n}{2}$, while linear cuts $E_k$ with index $k$ close to $1$ or $n-1$, respectively, are smaller than $w$.
Note that the cutwidth of the complete graph of order $n$ is $\lfloor \frac{n}{2} \rfloor \cdot \lceil \frac{n}{2} \rceil = \Theta(n^2)$.
\begin{lemma}\label{lem:ub_sum}
Let $c \in \left(0, \frac{1}{2}\right]$ be such that $w = c(1-c)n^2$.
If $cn \geq 1$, then
\[
\sum_{\ell=1}^{n-1} \ell m_\ell 
\leq \frac{1}{3}(4c^3 - 6c^2 + 3c)n^3 + \mathcal{O}(n^2).
\]
\end{lemma}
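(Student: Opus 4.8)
The plan is to sharpen the estimate $|E_k| \le w$ used in the proof of Lemma~\ref{lem:ub_sum_trivial} by combining it with the trivial bound $|E_k| \le k(n-k)$, which simply counts the potential edges between $\{v_1,\dots,v_k\}$ and $\{v_{k+1},\dots,v_n\}$. As observed in that proof, $\sum_{\ell=1}^{n-1}\ell m_\ell = \sum_{k=1}^{n-1}|E_k|$, so it suffices to bound $\sum_{k=1}^{n-1}\min\{w,\,k(n-k)\}$ from above.

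The crucial point is to locate which cuts are governed by which bound. Write $w = c(1-c)n^2$. Since $k \mapsto k(n-k)$ is a downward-opening quadratic and $cn(n-cn) = c(1-c)n^2 = w = (1-c)n\bigl(n-(1-c)n\bigr)$ with $cn \le (1-c)n$ (because $c \le \tfrac12$), we get $k(n-k) \le w$ exactly when $k \le cn$ or $k \ge (1-c)n$; equivalently, $k(n-k) > w$ precisely on the index set $I = \{k \in [n-1] : cn < k < (1-c)n\}$, which satisfies $|I| \le (1-2c)n + \mathcal{O}(1)$. Here the hypothesis $cn \ge 1$ ensures the ``outer'' regime $k \le cn$ is nonempty, which is what makes the refined bound stronger than the trivial $(n-1)w$. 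Consequently,
\[
\sum_{k=1}^{n-1}\min\{w,\,k(n-k)\}\;\le\;\sum_{k \notin I} k(n-k)\;+\;|I|\,w,
\]
and by the symmetry $k \leftrightarrow n-k$ the first sum is at most $2\sum_{k=1}^{\lfloor cn\rfloor} k(n-k)$.

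It then remains to evaluate the two pieces asymptotically. Using $\sum_{k=1}^{K}k = \frac{K^2}{2}+\mathcal{O}(K)$ and $\sum_{k=1}^{K}k^2 = \frac{K^3}{3}+\mathcal{O}(K^2)$ with $K = \lfloor cn\rfloor = cn+\mathcal{O}(1)$ gives
\[
2\sum_{k=1}^{\lfloor cn\rfloor} k(n-k) \;=\; 2\left(\frac{c^2}{2}-\frac{c^3}{3}\right)n^3+\mathcal{O}(n^2)\;=\;\left(c^2-\frac{2c^3}{3}\right)n^3+\mathcal{O}(n^2),
\]
while $|I|\,w \le \bigl((1-2c)n+\mathcal{O}(1)\bigr)c(1-c)n^2 = (c-3c^2+2c^3)n^3+\mathcal{O}(n^2)$. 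Adding the two and simplifying yields $\left(c-2c^2+\frac{4}{3}c^3\right)n^3+\mathcal{O}(n^2) = \frac{1}{3}(4c^3-6c^2+3c)n^3+\mathcal{O}(n^2)$, which is the claimed bound.

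I expect the only real work to be careful bookkeeping of the $\mathcal{O}(n^2)$ error terms: those arising from replacing $\lfloor cn\rfloor$ and $\lceil(1-c)n\rceil$ by $cn$ and $(1-c)n$, from the $\mathcal{O}(1)$ imprecision in $|I|$ (each stray index costing at most $\mathcal{O}(n^2)$), and from possible double counting of the single middle index when $c = \tfrac12$ and $n$ is even. One should also check the degenerate case $c = \tfrac12$ directly, where $I = \emptyset$ and the bound collapses to $\sum_{k=1}^{n-1}k(n-k) = \frac{n^3}{6}+\mathcal{O}(n^2)$, which indeed matches the formula.
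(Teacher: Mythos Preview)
Your proposal is correct and follows essentially the same approach as the paper: split $\sum_{k=1}^{n-1}|E_k|$ into the outer ranges $k\le\lfloor cn\rfloor$ and $k\ge n-\lfloor cn\rfloor$, where the bound $|E_k|\le k(n-k)$ is used, and the middle range, where $|E_k|\le w$ is used, then evaluate the resulting sums. The paper's write-up differs only cosmetically, computing $\sum_{k=1}^{\lfloor cn\rfloor}k(n-k)$ exactly as $\tfrac{1}{6}\lfloor cn\rfloor(\lfloor cn\rfloor+1)(3n-2\lfloor cn\rfloor-1)$ before passing to the asymptotic form, whereas you go directly to the asymptotics.
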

\begin{proof}
Let $c \in \left(0, \frac{1}{2}\right]$ be as in the statement of Lemma~\ref{lem:ub_sum}. 
For $k \in \left[ \lfloor cn \rfloor \right]$, there are at most $k(n-k)$ edges in the linear cut $E_k$.
The same is true for the $\lfloor cn \rfloor$ linear cuts $E_k$ with the $\lfloor cn \rfloor$ largest indices $k$.
The cutwidth $w$ of $G$ is an upper bound for the remaining linear cuts.
Therefore,
\begin{equation}
\begin{aligned}\label{ieq:1}
\sum_{\ell=1}^{n-1} \ell m_\ell 
= \sum_{k=1}^{n-1} |E_k|
&\leq (n - 1 - 2\lfloor cn \rfloor)w + 2 \cdot \sum_{k=1}^{\lfloor cn \rfloor}k(n-k)\\
&\leq (1-2c)nw + w + 2 \cdot \sum_{k=1}^{\lfloor cn \rfloor}k(n-k).
\end{aligned}
\end{equation}
Evaluating and estimating the sum gives
\[
\sum_{k=1}^{\lfloor cn \rfloor}k(n-k) 
= \frac{1}{6} \lfloor cn \rfloor (\lfloor cn \rfloor + 1)  (3n - 2\lfloor cn \rfloor - 1)
\leq \frac{1}{6} cn (cn + 1) (3n - 2cn + 1).
\]
Inserting this into (\ref{ieq:1}), substituting $w = c(1-c)n^2$, and simplifying gives
\[
\sum_{\ell=1}^{n-1} \ell m_\ell 
\leq \frac{1}{3} \left(
(4c^3 - 6c^2 + 3c)n^3
+ (-4c^2 + 6c)n^2
+ cn
\right)
\]
which implies the statement of Lemma~\ref{lem:ub_sum}.
\end{proof}

In order to estimate the size $m$ of $G$, we bound the objective function of a linear program.
For this note that $m = \sum_{\ell=1}^{n-1} m_\ell$.
\begin{equation}\tag{$P$}\label{lp:primal1}
\begin{array}{ll@{}ll}
\max \: &\displaystyle\sum_{\ell=1}^{n-1} &m_\ell,&\\
\text{s.t.} \: &&m_\ell \leq n-\ell, &\ell = 1, \dots, n-1,\\
&\displaystyle\sum_{\ell=1}^{n-1} \ell &m_\ell \leq b,&\\
&&m_\ell \geq 0, &\ell = 1, \dots, n-1.
\end{array}
\end{equation}
Here $b$ denotes an arbitrary upper bound of $\sum_{\ell=1}^{n-1}\ell m_\ell$, 
for instance the one of Lemma~\ref{lem:ub_sum_trivial} or Lemma~\ref{lem:ub_sum}.
We introduce the nonnegative dual variables $y_\ell$ for $\ell \in [n-1]$ and $x$ for the constraints of (\ref{lp:primal1}) in the same order as the constraints of (\ref{lp:primal1}).
The weak duality inequality chain of (\ref{lp:primal1}) is
\[
\sum_{\ell=1}^{n-1} m_\ell 
\leq \sum_{\ell=1}^{n-1} m_\ell y_\ell + \sum_{\ell=1}^{n-1}\ell m_\ell x 
= \sum_{\ell=1}^{n-1} (y_\ell + \ell x)m_\ell 
\leq \sum_{\ell=1}^{n-1} (n-\ell)y_\ell + bx.
\]
In particular, the dual of (\ref{lp:primal1}) is the following linear program:
\begin{equation}\tag{$D$}\label{lp:dual1}
\begin{array}{lr@{}ll}
\min \: &\displaystyle\sum_{\ell=1}^{n-1} (n-\ell) y_\ell \,&+\, bx,&\\
\text{s.t.} \: &y_\ell + \ell x \: &\geq 1, &\ell = 1, \dots, n-1,\\
&y_\ell \: &\geq 0, &\ell = 1, \dots, n-1,\\
&x \: &\geq 0.
\end{array}
\end{equation}
By weak duality, the objective value of any feasible solution of (\ref{lp:dual1}) is an upper bound to the objective value of (\ref{lp:primal1}), and therefore an upper bound on $m$.

We are in a position to prove Theorem~\ref{thm:size}.
\thmubs*
\begin{proof}
Let $d = \left\lceil \frac{\sqrt{w}}{a} \right\rceil$, where $a > 0$ is a fixed constant be determined such that $d \leq n$.
A feasible solution of (\ref{lp:dual1}) is given by
\[
x = \frac{a}{\sqrt{w}} 
\quad\text{and}\quad 
y_\ell = \begin{cases}
	1 - \frac{\ell a}{\sqrt{w}},&\text{if $\ell < d$},\\
	0,&\text{otherwise.}
\end{cases}
\]
We estimate the objective value of (\ref{lp:dual1}) from above.
By the discussion before Theorem~\ref{thm:size}, this suffices to prove the stated bounds on $m$.
For the sum we have
\begin{align*}
\sum_{\ell=1}^{n-1} (n-\ell) y_\ell
&= \sum_{\ell=1}^{d - 1} (n - \ell) \left(1 - \frac{\ell a}{\sqrt{w}}\right)
\\
&= \frac{(d - 1) \left(2 a d^{2} - 3 a d n - a d - 3 d \sqrt{w} + 6 n \sqrt{w}\right)}{6 \sqrt{w}}
\\
&\leq \frac{\frac{\sqrt{w}}{a} \left(2 a \left(\frac{\sqrt{w}}{a} + 1\right)^{2} - 3 a \frac{\sqrt{w}}{a} n - a \frac{\sqrt{w}}{a} - 3 \frac{\sqrt{w}}{a} \sqrt{w} + 6 n \sqrt{w}\right)}{6 \sqrt{w}}
\\
&= \frac{n\sqrt{w}}{2a} - \frac{w}{6a^2} + \frac{\sqrt{w}}{2a} + \frac{1}{3}.
\end{align*}

For (i), we use $a = \frac{1}{\sqrt{2}}$, and $b = (n-1)w$ from Lemma~\ref{lem:ub_sum_trivial} in the linear programs (\ref{lp:primal1}) and (\ref{lp:dual1}).
Note that, since the complete graph maximizes $w$ for a given order $n$ and
\[
a > \frac{1}{2} 
\geq \frac{\sqrt{\left\lfloor n/2 \right\rfloor \left\lceil n/2 \right\rceil}}{n}
\geq \frac{\sqrt{w}}{n},
\]
we have $\frac{\sqrt{w}}{a} \leq n$, and since $n$ is an integer, also $d \leq n$.
Therefore the choice of $a$ is valid.
Now,
\[
bx 
= (n-1)w \cdot \frac{a}{\sqrt{w}} 
\leq an\sqrt{w}.
\]
Note that, for the choice of $a$ and $w$ sufficiently large, $\frac{w}{6a^2} = w/3 \geq \sqrt{w/2} = \frac{\sqrt{w}}{2a}$.
We conclude that
\[
\sum_{\ell=1}^{n-1} (n-\ell)y_\ell + bx \leq \left( \frac{1}{2a} + a \right)n\sqrt{w} + \mathcal{O}(1).
\]
Observe that the term $\frac{1}{2a} + a$ attains its minimum $\sqrt{2}$ at $a = \frac{1}{\sqrt{2}}$.
This completes the proof for (i).

For (ii), assume that $c \in \left(0, \frac{1}{2}\right]$ is such that $w = c(1-c)n^2$ and $cn \geq 1$. 
Let $C = \frac{1}{3}(4c^3 - 6c^2 + 3c)$ and $C'=\sqrt{c(1-c)}$.
Here we use $b = Cn^3 + \mathcal{O}(n^2)$ from Lemma~\ref{lem:ub_sum}.
Then we have
\[
bx 
\leq \left( Cn^3 + \mathcal{O}(n^2) \right) \cdot \frac{a}{\sqrt{w}} 
= \frac{Ca}{C'}n^2 + \mathcal{O}(n).
\]
We conclude that in this case
\[
\sum_{\ell=1}^{n-1} (n-\ell)y_\ell + bx \leq \left(\frac{C'}{2a} + \frac{Ca}{C'} - \frac{C'^2}{6a^2}\right)n^2 + \mathcal{O}(n).
\]
Choosing $a = \frac{1}{2}$, which implies $d \leq n$ again, gives the desired result.
This completes the proof for (ii).
\end{proof}

\begin{figure}[ht]
\centering
\begin{tikzpicture}
\begin{scope}[shift={(-4.25,0)}]
\begin{axis}[
	axis equal image,
	axis lines=left,
	xlabel= \(c\),
	ylabel= \({a(c)}\),
	xmin=0, xmax=0.52,
	ymin=0.5, ymax=0.72,
	xtick={0, 0.1, 0.2, 0.3, 0.4, 0.5},
	ytick={0.5, 0.6, 0.7},
	xmajorgrids,
	ymajorgrids
]
\addplot[
	color=black,
] coordinates {
(0.001,0.696679649797151)(0.002,0.692423937628369)(0.003,0.689181993591304)(0.004,0.686464210745808)(0.005,0.684081024301118)(0.006,0.681935253255515)(0.007,0.679969190583974)(0.008,0.678145251764742)(0.009000000000000001,0.676437340709063)(0.01,0.674826464410294)(0.011,0.673298290244764)(0.012,0.671841687496674)(0.013000000000000001,0.670447807907391)(0.014,0.669109480104734)(0.015,0.667820796120481)(0.016,0.666576820399069)(0.017,0.665373379671243)(0.018000000000000002,0.664206907816993)(0.019,0.663074329092226)(0.02,0.661972968726055)(0.021,0.660900483434832)(0.022,0.659854806685289)(0.023,0.658834105052596)(0.024,0.657836743043110)(0.025,0.656861254458109)(0.026000000000000002,0.655906318871029)(0.027,0.654970742144860)(0.028,0.654053440172814)(0.029,0.653153425213712)(0.03,0.652269794333369)(0.031,0.651401719568495)(0.032,0.650548439509472)(0.033,0.649709252059672)(0.034,0.648883508176420)(0.035,0.648070606435730)(0.036000000000000004,0.647269988292111)(0.037,0.646481133927829)(0.038,0.645703558604473)(0.039,0.644936809444502)(0.04,0.644180462582434)(0.041,0.643434120635100)(0.042,0.642697410448367)(0.043000000000000003,0.641969981084286)(0.044,0.641251502018049)(0.045,0.640541661518647)(0.046,0.639840165190871)(0.047,0.639146734659428)(0.048,0.638461106378632)(0.049,0.637783030553312)(0.05,0.637112270158515)(0.051000000000000004,0.636448600047170)(0.052000000000000005,0.635791806136257)(0.053,0.635141684663218)(0.054,0.634498041505323)(0.055,0.633860691555623)(0.056,0.633229458149835)(0.057,0.632604172539170)(0.058,0.631984673404692)(0.059000000000000004,0.631370806409281)(0.06,0.630762423783692)(0.061,0.630159383943606)(0.062,0.629561551134880)(0.063,0.628968795104502)(0.064,0.628380990795007)(0.065,0.627798018060356)(0.066,0.627219761401450)(0.067,0.626646109719653)(0.068,0.626076956086855)(0.069,0.625512197530729)(0.07,0.624951734833982)(0.07100000000000001,0.624395472346493)(0.07200000000000001,0.623843317809355)(0.073,0.623295182189902)(0.074,0.622750979526895)(0.075,0.622210626785122)(0.076,0.621674043718718)(0.077,0.621141152742565)(0.078,0.620611878811216)(0.079,0.620086149304799)(0.08,0.619563893921410)(0.081,0.619045044575573)(0.082,0.618529535302327)(0.083,0.618017302166588)(0.084,0.617508283177414)(0.085,0.617002418206878)(0.08600000000000001,0.616499648913221)(0.08700000000000001,0.615999918668043)(0.088,0.615503172487250)(0.089,0.615009356965539)(0.09,0.614518420214196)(0.091,0.614030311802011)(0.092,0.613544982699107)(0.093,0.613062385223529)(0.094,0.612582472990412)(0.095,0.612105200863590)(0.096,0.611630524909500)(0.097,0.611158402353242)(0.098,0.610688791536689)(0.099,0.610221651878516)(0.1,0.609756943836056)(0.101,0.609294628868864)(0.10200000000000001,0.608834669403921)(0.10300000000000001,0.608377028802361)(0.10400000000000001,0.607921671327664)(0.105,0.607468562115229)(0.106,0.607017667143241)(0.107,0.606568953204794)(0.108,0.606122387881177)(0.109,0.605677939516278)(0.11,0.605235577192052)(0.111,0.604795270704987)(0.112,0.604356990543537)(0.113,0.603920707866449)(0.114,0.603486394481973)(0.115,0.603054022827880)(0.116,0.602623565952276)(0.117,0.602194997495150)(0.11800000000000001,0.601768291670648)(0.11900000000000001,0.601343423250014)(0.12,0.600920367545183)(0.121,0.600499100392989)(0.122,0.600079598139969)(0.123,0.599661837627721)(0.124,0.599245796178804)(0.125,0.598831451583151)(0.126,0.598418782084976)(0.127,0.598007766370144)(0.128,0.597598383553999)(0.129,0.597190613169614)(0.13,0.596784435156457)(0.131,0.596379829849452)(0.132,0.595976777968416)(0.133,0.595575260607855)(0.134,0.595175259227117)(0.135,0.594776755640861)(0.136,0.594379732009868)(0.137,0.593984170832132)(0.138,0.593590054934269)(0.139,0.593197367463192)(0.14,0.592806091878067)(0.14100000000000001,0.592416211942520)(0.14200000000000002,0.592027711717105)(0.14300000000000002,0.591640575552004)(0.14400000000000002,0.591254788079963)(0.145,0.590870334209447)(0.146,0.590487199118016)(0.147,0.590105368245897)(0.148,0.589724827289761)(0.149,0.589345562196690)(0.15,0.588967559158330)(0.151,0.588590804605212)(0.152,0.588215285201257)(0.153,0.587840987838435)(0.154,0.587467899631586)(0.155,0.587096007913397)(0.156,0.586725300229523)(0.157,0.586355764333848)(0.158,0.585987388183893)(0.159,0.585620159936343)(0.16,0.585254067942717)(0.161,0.584889100745145)(0.162,0.584525247072282)(0.163,0.584162495835323)(0.164,0.583800836124138)(0.165,0.583440257203509)(0.166,0.583080748509478)(0.167,0.582722299645790)(0.168,0.582364900380432)(0.169,0.582008540642276)(0.17,0.581653210517803)(0.171,0.581298900247917)(0.17200000000000001,0.580945600224850)(0.17300000000000001,0.580593300989147)(0.17400000000000002,0.580241993226725)(0.17500000000000002,0.579891667766021)(0.176,0.579542315575202)(0.177,0.579193927759462)(0.178,0.578846495558379)(0.179,0.578500010343344)(0.18,0.578154463615053)(0.181,0.577809847001074)(0.182,0.577466152253462)(0.183,0.577123371246446)(0.184,0.576781495974163)(0.185,0.576440518548467)(0.186,0.576100431196773)(0.187,0.575761226259967)(0.188,0.575422896190366)(0.189,0.575085433549730)(0.19,0.574748831007314)(0.191,0.574413081337982)(0.192,0.574078177420356)(0.193,0.573744112235016)(0.194,0.573410878862740)(0.195,0.573078470482794)(0.196,0.572746880371250)(0.197,0.572416101899361)(0.198,0.572086128531960)(0.199,0.571756953825907)(0.2,0.571428571428571)(0.201,0.571100975076345)(0.202,0.570774158593197)(0.203,0.570448115889261)(0.20400000000000001,0.570122840959453)(0.20500000000000002,0.569798327882124)(0.20600000000000002,0.569474570817744)(0.20700000000000002,0.569151564007620)(0.20800000000000002,0.568829301772633)(0.209,0.568507778512020)(0.21,0.568186988702172)(0.211,0.567866926895465)(0.212,0.567547587719118)(0.213,0.567228965874076)(0.214,0.566911056133921)(0.215,0.566593853343805)(0.216,0.566277352419410)(0.217,0.565961548345931)(0.218,0.565646436177083)(0.219,0.565332011034130)(0.22,0.565018268104938)(0.221,0.564705202643046)(0.222,0.564392809966764)(0.223,0.564081085458285)(0.224,0.563770024562824)(0.225,0.563459622787771)(0.226,0.563149875701869)(0.227,0.562840778934405)(0.228,0.562532328174426)(0.229,0.562224519169967)(0.23,0.561917347727302)(0.231,0.561610809710206)(0.232,0.561304901039242)(0.233,0.560999617691061)(0.234,0.560694955697714)(0.23500000000000001,0.560390911145987)(0.23600000000000002,0.560087480176748)(0.23700000000000002,0.559784658984308)(0.23800000000000002,0.559482443815803)(0.23900000000000002,0.559180830970584)(0.24,0.558879816799624)(0.241,0.558579397704939)(0.242,0.558279570139027)(0.243,0.557980330604310)(0.244,0.557681675652605)(0.245,0.557383601884591)(0.246,0.557086105949304)(0.247,0.556789184543635)(0.248,0.556492834411841)(0.249,0.556197052345079)(0.25,0.555901835180934)(0.251,0.555607179802975)(0.252,0.555313083140316)(0.253,0.555019542167187)(0.254,0.554726553902515)(0.255,0.554434115409529)(0.256,0.554142223795357)(0.257,0.553850876210646)(0.258,0.553560069849191)(0.259,0.553269801947571)(0.26,0.552980069784801)(0.261,0.552690870681986)(0.262,0.552402202001991)(0.263,0.552114061149124)(0.264,0.551826445568819)(0.265,0.551539352747336)(0.266,0.551252780211472)(0.267,0.550966725528272)(0.268,0.550681186304763)(0.269,0.550396160187684)(0.27,0.550111644863232)(0.271,0.549827638056821)(0.272,0.549544137532842)(0.273,0.549261141094430)(0.274,0.548978646583255)(0.275,0.548696651879306)(0.276,0.548415154900690)(0.277,0.548134153603441)(0.278,0.547853645981332)(0.279,0.547573630065704)(0.28,0.547294103925298)(0.281,0.547015065666093)(0.28200000000000003,0.546736513431159)(0.28300000000000003,0.546458445400515)(0.28400000000000003,0.546180859790993)(0.28500000000000003,0.545903754856117)(0.28600000000000003,0.545627128885981)(0.28700000000000003,0.545350980207143)(0.28800000000000003,0.545075307182529)(0.289,0.544800108211330)(0.29,0.544525381728928)(0.291,0.544251126206820)(0.292,0.543977340152542)(0.293,0.543704022109618)(0.294,0.543431170657509)(0.295,0.543158784411563)(0.296,0.542886862022989)(0.297,0.542615402178828)(0.298,0.542344403601933)(0.299,0.542073865050962)(0.3,0.541803785320379)(0.301,0.541534163240456)(0.302,0.541264997677294)(0.303,0.540996287532843)(0.304,0.540728031744944)(0.305,0.540460229287359)(0.306,0.540192879169830)(0.307,0.539925980438134)(0.308,0.539659532174156)(0.309,0.539393533495957)(0.31,0.539127983557871)(0.311,0.538862881550594)(0.312,0.538598226701289)(0.313,0.538334018273702)(0.314,0.538070255568279)(0.315,0.537806937922312)(0.316,0.537544064710065)(0.317,0.537281635342938)(0.318,0.537019649269622)(0.319,0.536758105976278)(0.32,0.536497004986714)(0.321,0.536236345862579)(0.322,0.535976128203564)(0.323,0.535716351647625)(0.324,0.535457015871195)(0.325,0.535198120589427)(0.326,0.534939665556442)(0.327,0.534681650565583)(0.328,0.534424075449687)(0.329,0.534166940081369)(0.33,0.533910244373312)(0.331,0.533653988278576)(0.332,0.533398171790917)(0.333,0.533142794945116)(0.334,0.532887857817325)(0.335,0.532633360525426)(0.336,0.532379303229400)(0.337,0.532125686131717)(0.338,0.531872509477732)(0.339,0.531619773556099)(0.34,0.531367478699201)(0.341,0.531115625283597)(0.342,0.530864213730479)(0.343,0.530613244506148)(0.34400000000000003,0.530362718122509)(0.34500000000000003,0.530112635137575)(0.34600000000000003,0.529862996156001)(0.34700000000000003,0.529613801829623)(0.34800000000000003,0.529365052858015)(0.34900000000000003,0.529116749989084)(0.35000000000000003,0.528868894019655)(0.35100000000000003,0.528621485796101)(0.352,0.528374526214972)(0.353,0.528128016223666)(0.354,0.527881956821104)(0.355,0.527636349058434)(0.356,0.527391194039759)(0.357,0.527146492922880)(0.358,0.526902246920078)(0.359,0.526658457298905)(0.36,0.526415125383003)(0.361,0.526172252552962)(0.362,0.525929840247191)(0.363,0.525687889962817)(0.364,0.525446403256622)(0.365,0.525205381746001)(0.366,0.524964827109956)(0.367,0.524724741090117)(0.368,0.524485125491797)(0.369,0.524245982185083)(0.37,0.524007313105951)(0.371,0.523769120257444)(0.372,0.523531405710843)(0.373,0.523294171606918)(0.374,0.523057420157201)(0.375,0.522821153645292)(0.376,0.522585374428225)(0.377,0.522350084937860)(0.378,0.522115287682342)(0.379,0.521880985247582)(0.38,0.521647180298810)(0.381,0.521413875582157)(0.382,0.521181073926315)(0.383,0.520948778244226)(0.384,0.520716991534841)(0.385,0.520485716884945)(0.386,0.520254957471018)(0.387,0.520024716561195)(0.388,0.519794997517261)(0.389,0.519565803796723)(0.39,0.519337138954970)(0.391,0.519109006647486)(0.392,0.518881410632152)(0.393,0.518654354771621)(0.394,0.518427843035788)(0.395,0.518201879504349)(0.396,0.517976468369420)(0.397,0.517751613938300)(0.398,0.517527320636301)(0.399,0.517303593009689)(0.4,0.517080435728734)(0.401,0.516857853590879)(0.402,0.516635851524019)(0.403,0.516414434589920)(0.404,0.516193607987734)(0.405,0.515973377057698)(0.406,0.515753747284941)(0.40700000000000003,0.515534724303443)(0.40800000000000003,0.515316313900165)(0.40900000000000003,0.515098522019340)(0.41000000000000003,0.514881354766922)(0.41100000000000003,0.514664818415242)(0.41200000000000003,0.514448919407826)(0.41300000000000003,0.514233664364432)(0.41400000000000003,0.514019060086302)(0.41500000000000004,0.513805113561609)(0.41600000000000004,0.513591831971170)(0.417,0.513379222694383)(0.418,0.513167293315431)(0.419,0.512956051629764)(0.42,0.512745505650864)(0.421,0.512535663617315)(0.422,0.512326534000203)(0.423,0.512118125510837)(0.424,0.511910447108872)(0.425,0.511703508010783)(0.426,0.511497317698724)(0.427,0.511291885929894)(0.428,0.511087222746282)(0.429,0.510883338484926)(0.43,0.510680243788719)(0.431,0.510477949617715)(0.432,0.510276467261055)(0.433,0.510075808349520)(0.434,0.509875984868709)(0.435,0.509677009172997)(0.436,0.509478894000204)(0.437,0.509281652487091)(0.438,0.509085298185762)(0.439,0.508889845080940)(0.44,0.508695307608317)(0.441,0.508501700673967)(0.442,0.508309039674880)(0.443,0.508117340520829)(0.444,0.507926619657526)(0.445,0.507736894091292)(0.446,0.507548181415273)(0.447,0.507360499837383)(0.448,0.507173868210139)(0.449,0.506988306062474)(0.45,0.506803833633782)(0.451,0.506620471910326)(0.452,0.506438242664347)(0.453,0.506257168495921)(0.454,0.506077272878036)(0.455,0.505898580205110)(0.456,0.505721115845256)(0.457,0.505544906196750)(0.458,0.505369978749112)(0.459,0.505196362149300)(0.46,0.505024086273484)(0.461,0.504853182305244)(0.462,0.504683682820714)(0.463,0.504515621881524)(0.464,0.504349035136718)(0.465,0.504183959934313)(0.466,0.504020435444276)(0.467,0.503858502793904)(0.468,0.503698205217766)(0.46900000000000003,0.503539588223750)(0.47000000000000003,0.503382699778109)(0.47100000000000003,0.503227590511713)(0.47200000000000003,0.503074313951392)(0.47300000000000003,0.502922926779779)(0.47400000000000003,0.502773489129003)(0.47500000000000003,0.502626064913423)(0.47600000000000003,0.502480722208864)(0.47700000000000004,0.502337533686757)(0.47800000000000004,0.502196577113881)(0.47900000000000004,0.502057935930998)(0.48,0.501921699927387)(0.481,0.501787966032087)(0.482,0.501656839249781)(0.483,0.501528433777160)(0.484,0.501402874346190)(0.485,0.501280297858327)(0.486,0.501160855395787)(0.487,0.501044714728772)(0.488,0.500932063489444)(0.489,0.500823113257949)(0.49,0.500718104928866)(0.491,0.500617315925209)(0.492,0.500521070170249)(0.493,0.500429752339304)(0.494,0.500343829097811)(0.495,0.500263882473854)(0.496,0.500190666088178)(0.497,0.500125209600449)(0.498,0.500069043299375)(0.499,0.500024821611436)(0.5,0.500000000000000)
};
\end{axis}
\end{scope}
\begin{scope}[shift={(4.25,0)}]
\begin{axis}[
	axis equal image,
	axis lines=left,
	xlabel= \(c\),
	ylabel= \({f(c), g(c)}\),
	xmin=0, xmax=0.52,
	ymin=0, ymax=0.52,
	xtick={0, 0.1, 0.2, 0.3, 0.4, 0.5},
	ytick={0, 0.1, 0.2, 0.3, 0.4, 0.5},
	xmajorgrids,
	ymajorgrids
]
\addplot[
	domain=0.000001:0.5,
	samples=500,
	color=black,
]
{(x*(4*x^2 + 4*x*sqrt(x*(1-x)) - 12*x - 4*sqrt(x*(1-x)) + 9)/(6*sqrt(x*(1-x)))};
\addplot[
	color=black,
	dashed
] coordinates {
(0.001,0.0443388707897048)(0.002,0.0624405077772612)(0.003,0.0762063900559429)(0.004,0.0877206822512060)(0.005,0.0977916538465298)(0.006,0.106834320115452)(0.007,0.115095503560343)(0.008,0.122736206972130)(0.009000000000000001,0.129868391024015)(0.01,0.136573666606450)(0.011,0.142913713625088)(0.012,0.148936505911206)(0.013000000000000001,0.154680238101712)(0.014,0.160175913662349)(0.015,0.165449113138330)(0.016,0.170521239384274)(0.017,0.175410417336271)(0.018000000000000002,0.180132158748530)(0.019,0.184699862872592)(0.02,0.189125200030480)(0.021,0.193418409930359)(0.022,0.197588536813887)(0.023,0.201643617061672)(0.024,0.205590830509115)(0.025,0.209436623705737)(0.026000000000000002,0.213186811229940)(0.027,0.216846659656722)(0.028,0.220420957678744)(0.029,0.223914075075417)(0.03,0.227330012625844)(0.031,0.230672444611004)(0.032,0.233944755208420)(0.033,0.237150069819891)(0.034,0.240291282169544)(0.035,0.243371077850597)(0.036000000000000004,0.246391954874227)(0.037,0.249356241674752)(0.038,0.252266112946125)(0.039,0.255123603621095)(0.04,0.257930621252824)(0.041,0.260688957016884)(0.042,0.263400295517228)(0.043000000000000003,0.266066223551504)(0.044,0.268688237967803)(0.045,0.271267752725485)(0.046,0.273806105256649)(0.047,0.276304562211177)(0.048,0.278764324656963)(0.049,0.281186532797248)(0.05,0.283572270258868)(0.051000000000000004,0.285922567998240)(0.052000000000000005,0.288238407866003)(0.053,0.290520725866136)(0.054,0.292770415141024)(0.055,0.294988328710171)(0.056,0.297175281987000)(0.057,0.299332055095381)(0.058,0.301459395005061)(0.059000000000000004,0.303558017503029)(0.06,0.305628609016011)(0.061,0.307671828297623)(0.062,0.309688307992303)(0.063,0.311678656086854)(0.064,0.313643457259359)(0.065,0.315583274134196)(0.066,0.317498648451049)(0.067,0.319390102155030)(0.068,0.321258138414328)(0.069,0.323103242571215)(0.07,0.324925883031661)(0.07100000000000001,0.326726512098368)(0.07200000000000001,0.328505566751549)(0.073,0.330263469381430)(0.074,0.332000628476072)(0.075,0.333717439267816)(0.076,0.335414284341357)(0.077,0.337091534206198)(0.078,0.338749547836021)(0.079,0.340388673177277)(0.08,0.342009247629121)(0.081,0.343611598496658)(0.082,0.345196043419283)(0.083,0.346762890775782)(0.084,0.348312440067716)(0.085,0.349844982282502)(0.08600000000000001,0.351360800237494)(0.08700000000000001,0.352860168906262)(0.088,0.354343355728193)(0.089,0.355810620902451)(0.09,0.357262217667239)(0.091,0.358698392565268)(0.092,0.360119385696260)(0.093,0.361525430957237)(0.094,0.362916756271333)(0.095,0.364293583805786)(0.096,0.365656130179722)(0.097,0.367004606662328)(0.098,0.368339219361941)(0.099,0.369660169406567)(0.1,0.370967653116290)(0.101,0.372261862168032)(0.10200000000000001,0.373542983753053)(0.10300000000000001,0.374811200727605)(0.10400000000000001,0.376066691757082)(0.105,0.377309631454008)(0.106,0.378540190510206)(0.107,0.379758535823411)(0.108,0.380964830618641)(0.109,0.382159234564575)(0.11,0.383341903885190)(0.111,0.384512991466896)(0.112,0.385672646961382)(0.113,0.386821016884393)(0.114,0.387958244710620)(0.115,0.389084470964905)(0.116,0.390199833309920)(0.117,0.391304466630496)(0.11800000000000001,0.392398503114756)(0.11900000000000001,0.393482072332194)(0.12,0.394555301308845)(0.121,0.395618314599682)(0.122,0.396671234358353)(0.123,0.397714180404388)(0.124,0.398747270287977)(0.125,0.399770619352445)(0.126,0.400784340794495)(0.127,0.401788545722348)(0.128,0.402783343211838)(0.129,0.403768840360581)(0.13,0.404745142340270)(0.131,0.405712352447199)(0.132,0.406670572151065)(0.133,0.407619901142148)(0.134,0.408560437376907)(0.135,0.409492277122075)(0.136,0.410415514997307)(0.137,0.411330244016439)(0.138,0.412236555627411)(0.139,0.413134539750918)(0.14,0.414024284817816)(0.14100000000000001,0.414905877805360)(0.14200000000000002,0.415779404272293)(0.14300000000000002,0.416644948392850)(0.14400000000000002,0.417502592989706)(0.145,0.418352419565909)(0.146,0.419194508335847)(0.147,0.420028938255264)(0.148,0.420855787050390)(0.149,0.421675131246183)(0.15,0.422487046193745)(0.151,0.423291606096926)(0.152,0.424088884038144)(0.153,0.424878952003468)(0.154,0.425661880906958)(0.155,0.426437740614321)(0.156,0.427206599965889)(0.157,0.427968526798938)(0.158,0.428723587969392)(0.159,0.429471849372911)(0.16,0.430213375965395)(0.161,0.430948231782925)(0.162,0.431676479961153)(0.163,0.432398182754167)(0.164,0.433113401552841)(0.165,0.433822196902698)(0.166,0.434524628521283)(0.167,0.435220755315091)(0.168,0.435910635396029)(0.169,0.436594326097463)(0.17,0.437271883989832)(0.171,0.437943364895866)(0.17200000000000001,0.438608823905407)(0.17300000000000001,0.439268315389851)(0.17400000000000002,0.439921893016223)(0.17500000000000002,0.440569609760894)(0.176,0.441211517922952)(0.177,0.441847669137238)(0.178,0.442478114387063)(0.179,0.443102904016597)(0.18,0.443722087742969)(0.181,0.444335714668060)(0.182,0.444943833290010)(0.183,0.445546491514452)(0.184,0.446143736665470)(0.185,0.446735615496297)(0.186,0.447322174199759)(0.187,0.447903458418469)(0.188,0.448479513254786)(0.189,0.449050383280535)(0.19,0.449616112546503)(0.191,0.450176744591718)(0.192,0.450732322452511)(0.193,0.451282888671368)(0.194,0.451828485305587)(0.195,0.452369153935733)(0.196,0.452904935673904)(0.197,0.453435871171814)(0.198,0.453962000628689)(0.199,0.454483363798997)(0.2,0.455000000000000)(0.201,0.455511948119142)(0.202,0.456019246621275)(0.203,0.456521933555733)(0.20400000000000001,0.457020046563243)(0.20500000000000002,0.457513622882697)(0.20600000000000002,0.458002699357772)(0.20700000000000002,0.458487312443414)(0.20800000000000002,0.458967498212179)(0.209,0.459443292360445)(0.21,0.459914730214494)(0.211,0.460381846736461)(0.212,0.460844676530163)(0.213,0.461303253846810)(0.214,0.461757612590595)(0.215,0.462207786324167)(0.216,0.462653808274000)(0.217,0.463095711335646)(0.218,0.463533528078884)(0.219,0.463967290752768)(0.22,0.464397031290568)(0.221,0.464822781314622)(0.222,0.465244572141080)(0.223,0.465662434784566)(0.224,0.466076399962740)(0.225,0.466486498100774)(0.226,0.466892759335742)(0.227,0.467295213520924)(0.228,0.467693890230024)(0.229,0.468088818761314)(0.23,0.468480028141690)(0.231,0.468867547130660)(0.232,0.469251404224249)(0.233,0.469631627658834)(0.234,0.470008245414907)(0.23500000000000001,0.470381285220767)(0.23600000000000002,0.470750774556143)(0.23700000000000002,0.471116740655751)(0.23800000000000002,0.471479210512785)(0.23900000000000002,0.471838210882341)(0.24,0.472193768284788)(0.241,0.472545909009061)(0.242,0.472894659115914)(0.243,0.473240044441097)(0.244,0.473582090598490)(0.245,0.473920822983168)(0.246,0.474256266774424)(0.247,0.474588446938726)(0.248,0.474917388232635)(0.249,0.475243115205660)(0.25,0.475565652203070)(0.251,0.475885023368652)(0.252,0.476201252647430)(0.253,0.476514363788324)(0.254,0.476824380346776)(0.255,0.477131325687321)(0.256,0.477435222986123)(0.257,0.477736095233456)(0.258,0.478033965236161)(0.259,0.478328855620041)(0.26,0.478620788832229)(0.261,0.478909787143517)(0.262,0.479195872650637)(0.263,0.479479067278512)(0.264,0.479759392782468)(0.265,0.480036870750406)(0.266,0.480311522604945)(0.267,0.480583369605526)(0.268,0.480852432850480)(0.269,0.481118733279065)(0.27,0.481382291673474)(0.271,0.481643128660801)(0.272,0.481901264714987)(0.273,0.482156720158725)(0.274,0.482409515165341)(0.275,0.482659669760644)(0.276,0.482907203824744)(0.277,0.483152137093846)(0.278,0.483394489162016)(0.279,0.483634279482909)(0.28,0.483871527371491)(0.281,0.484106252005709)(0.28200000000000003,0.484338472428158)(0.28300000000000003,0.484568207547710)(0.28400000000000003,0.484795476141118)(0.28500000000000003,0.485020296854604)(0.28600000000000003,0.485242688205414)(0.28700000000000003,0.485462668583352)(0.28800000000000003,0.485680256252298)(0.289,0.485895469351691)(0.29,0.486108325898000)(0.291,0.486318843786170)(0.292,0.486527040791043)(0.293,0.486732934568766)(0.294,0.486936542658168)(0.295,0.487137882482128)(0.296,0.487336971348913)(0.297,0.487533826453505)(0.298,0.487728464878904)(0.299,0.487920903597412)(0.3,0.488111159471899)(0.301,0.488299249257056)(0.302,0.488485189600624)(0.303,0.488668997044604)(0.304,0.488850688026461)(0.305,0.489030278880297)(0.306,0.489207785838017)(0.307,0.489383225030476)(0.308,0.489556612488609)(0.309,0.489727964144548)(0.31,0.489897295832721)(0.311,0.490064623290935)(0.312,0.490229962161452)(0.313,0.490393327992035)(0.314,0.490554736236999)(0.315,0.490714202258231)(0.316,0.490871741326207)(0.317,0.491027368620990)(0.318,0.491181099233216)(0.319,0.491332948165069)(0.32,0.491482930331238)(0.321,0.491631060559868)(0.322,0.491777353593493)(0.323,0.491921824089959)(0.324,0.492064486623335)(0.325,0.492205355684814)(0.326,0.492344445683593)(0.327,0.492481770947760)(0.328,0.492617345725148)(0.329,0.492751184184195)(0.33,0.492883300414784)(0.331,0.493013708429076)(0.332,0.493142422162328)(0.333,0.493269455473711)(0.334,0.493394822147104)(0.335,0.493518535891886)(0.336,0.493640610343723)(0.337,0.493761059065330)(0.338,0.493879895547239)(0.339,0.493997133208551)(0.34,0.494112785397677)(0.341,0.494226865393073)(0.342,0.494339386403965)(0.343,0.494450361571066)(0.34400000000000003,0.494559803967285)(0.34500000000000003,0.494667726598424)(0.34600000000000003,0.494774142403868)(0.34700000000000003,0.494879064257273)(0.34800000000000003,0.494982504967234)(0.34900000000000003,0.495084477277957)(0.35000000000000003,0.495184993869916)(0.35100000000000003,0.495284067360503)(0.352,0.495381710304673)(0.353,0.495477935195581)(0.354,0.495572754465207)(0.355,0.495666180484984)(0.356,0.495758225566406)(0.357,0.495848901961638)(0.358,0.495938221864119)(0.359,0.496026197409152)(0.36,0.496112840674495)(0.361,0.496198163680938)(0.362,0.496282178392878)(0.363,0.496364896718889)(0.364,0.496446330512283)(0.365,0.496526491571664)(0.366,0.496605391641476)(0.367,0.496683042412553)(0.368,0.496759455522647)(0.369,0.496834642556968)(0.37,0.496908615048707)(0.371,0.496981384479556)(0.372,0.497052962280223)(0.373,0.497123359830942)(0.374,0.497192588461980)(0.375,0.497260659454131)(0.376,0.497327584039213)(0.377,0.497393373400558)(0.378,0.497458038673492)(0.379,0.497521590945817)(0.38,0.497584041258283)(0.381,0.497645400605062)(0.382,0.497705679934205)(0.383,0.497764890148109)(0.384,0.497823042103967)(0.385,0.497880146614225)(0.386,0.497936214447022)(0.387,0.497991256326638)(0.388,0.498045282933929)(0.389,0.498098304906762)(0.39,0.498150332840443)(0.391,0.498201377288146)(0.392,0.498251448761332)(0.393,0.498300557730169)(0.394,0.498348714623943)(0.395,0.498395929831473)(0.396,0.498442213701512)(0.397,0.498487576543153)(0.398,0.498532028626231)(0.399,0.498575580181711)(0.4,0.498618241402088)(0.401,0.498660022441773)(0.402,0.498700933417477)(0.403,0.498740984408595)(0.404,0.498780185457586)(0.405,0.498818546570346)(0.406,0.498856077716584)(0.40700000000000003,0.498892788830188)(0.40800000000000003,0.498928689809595)(0.40900000000000003,0.498963790518154)(0.41000000000000003,0.498998100784485)(0.41100000000000003,0.499031630402839)(0.41200000000000003,0.499064389133453)(0.41300000000000003,0.499096386702902)(0.41400000000000003,0.499127632804447)(0.41500000000000004,0.499158137098388)(0.41600000000000004,0.499187909212400)(0.417,0.499216958741885)(0.418,0.499245295250304)(0.419,0.499272928269517)(0.42,0.499299867300118)(0.421,0.499326121811769)(0.422,0.499351701243527)(0.423,0.499376615004177)(0.424,0.499400872472552)(0.425,0.499424482997863)(0.426,0.499447455900017)(0.427,0.499469800469939)(0.428,0.499491525969889)(0.429,0.499512641633778)(0.43,0.499533156667484)(0.431,0.499553080249162)(0.432,0.499572421529557)(0.433,0.499591189632314)(0.434,0.499609393654283)(0.435,0.499627042665824)(0.436,0.499644145711117)(0.437,0.499660711808461)(0.438,0.499676749950577)(0.439,0.499692269104905)(0.44,0.499707278213911)(0.441,0.499721786195376)(0.442,0.499735801942700)(0.443,0.499749334325192)(0.444,0.499762392188370)(0.445,0.499774984354249)(0.446,0.499787119621640)(0.447,0.499798806766436)(0.448,0.499810054541907)(0.449,0.499820871678991)(0.45,0.499831266886581)(0.451,0.499841248851816)(0.452,0.499850826240370)(0.453,0.499860007696740)(0.454,0.499868801844533)(0.455,0.499877217286758)(0.456,0.499885262606110)(0.457,0.499892946365256)(0.458,0.499900277107129)(0.459,0.499907263355211)(0.46,0.499913913613821)(0.461,0.499920236368406)(0.462,0.499926240085830)(0.463,0.499931933214659)(0.464,0.499937324185457)(0.465,0.499942421411070)(0.466,0.499947233286926)(0.467,0.499951768191321)(0.468,0.499956034485717)(0.46900000000000003,0.499960040515035)(0.47000000000000003,0.499963794607954)(0.47100000000000003,0.499967305077210)(0.47200000000000003,0.499970580219895)(0.47300000000000003,0.499973628317761)(0.47400000000000003,0.499976457637525)(0.47500000000000003,0.499979076431177)(0.47600000000000003,0.499981492936291)(0.47700000000000004,0.499983715376341)(0.47800000000000004,0.499985751961014)(0.47900000000000004,0.499987610886535)(0.48,0.499989300335992)(0.481,0.499990828479667)(0.482,0.499992203475371)(0.483,0.499993433468783)(0.484,0.499994526593804)(0.485,0.499995490972905)(0.486,0.499996334717494)(0.487,0.499997065928287)(0.488,0.499997692695688)(0.489,0.499998223100182)(0.49,0.499998665212742)(0.491,0.499999027095249)(0.492,0.499999316800925)(0.493,0.499999542374797)(0.494,0.499999711854172)(0.495,0.499999833269146)(0.496,0.499999914643155)(0.497,0.499999963993558)(0.498,0.499999989332295)(0.499,0.499999998666621)(0.5,0.500000000000000)
};
\end{axis}
\end{scope}
\end{tikzpicture}
\caption{Left: The choice of $a$ depending on $c$ for the minimum of $\frac{C'}{2a} + \frac{Ca}{C'} + \frac{C'^2}{6a^2}$.
Right: $f(c) = \frac{c(4c^2 + 4c\sqrt{c(1-c)} - 12c - 4\sqrt{c(1-c)} + 9)}{6 \sqrt{c(1-c)}}$ (straight) from the bound of Theorem~\ref{thm:size}~(ii) compared to the improved bound $g(c)$ (dashed) on $\left[0, \frac{1}{2}\right]$; see Remark~\ref{rem:size}.}\label{fig:min}
\end{figure}
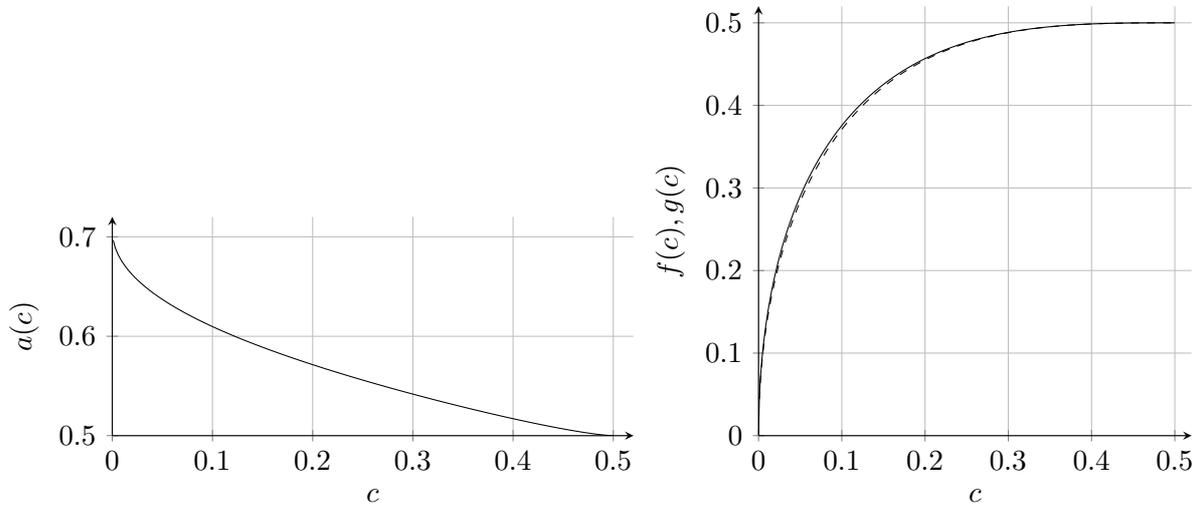

\begin{remark}\label{rem:size}
An attentive reader might have noticed that in the proof of Theorem~\ref{thm:size}~(ii) the choice $a = \frac{1}{2}$ is not the minimum of $\frac{C'}{2a} + \frac{Ca}{C'} + \frac{C'^2}{6a^2}$ for every $c \in \left(0,\frac{1}{2}\right]$, but it is very ``close'' to it.
By applying the first-order condition to this term, we find that choosing $a$ in dependence of $c$ as shown on the left side of Figure~\ref{fig:min} gives the minimum.
For the sake of simplicity, we do not state the function $a(c)$ explicitly.
The resulting improved bound $g(c)$ (dashed) is compared to the bound $f(c)$ stated in Theorem~\ref{thm:size}~(ii) (straight) on the right side of Figure~\ref{fig:min}.
Again, for the sake of exposition, we do not state the function $g(c)$ explicitly.
As claimed, the difference is ``marginal''.
\end{remark}

We close this section with an example that shows that the bound of Theorem~\ref{thm:size}~(i) is asymptotically sharp.
Let $k$ be a positive integer, $n$ an integer larger than $k$, and $w = 1+2+\dots+k = \frac{k(k+1)}{2}$.
Let $G$ be the graph with vertices $v_1, v_2, \dots, v_n$ and edges $v_iv_j$ for $i,j \in [n]$ with $|i-j| \leq k$; see Figure~\ref{fig:cw-extremal}.
The linear ordering $\pi = v_1v_2\dots v_n$ witnesses that $G$ has cutwidth at most $w$ while the size of $G$ is 
\[
n-1 + n-2 + \dots + n-k = kn - \frac{k(k+1)}{2} = \left(\sqrt{w + \frac{1}{4}} - \frac{1}{2}\right)n - w.
\]
We believe that this is the extremal configuration.

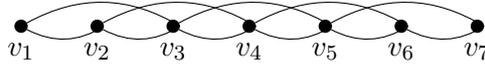
\begin{figure}[h]
\centering
\begin{tikzpicture}
\foreach \i in {1,...,7}{
	\node[dot,label=below:{$v_{\i}$}] (v\i) at (\i,0) {};
};
\foreach \i in {1,...,6}{
	\pgfmathparse{\i+1}
	\draw[bend right] (v\i) to (v\pgfmathresult);
};
\foreach \i in {1,...,5}{
	\pgfmathparse{\i+2}
	\draw[bend left] (v\i) to (v\pgfmathresult);
};
\end{tikzpicture}
\caption{The constructed graph for $k=2$ and $n=7$.}\label{fig:cw-extremal}
\end{figure}

\section{Proof of Theorem~\ref{thm:unsuited} and Corollary~\ref{cor:unsuited}}\label{sec:unsuited}
Let $(G, \pi)$ be an arbitrary fixed instance of OSCM[CW].
In this section we provide a sharp upper bound on the number of unsuited pairs in terms $n$ and $w$, where $w$ is the cutwidth of $G$ with respect to $\pi$.
We start right away with the main theorem of this section.
\thmunsuited*
\begin{proof}
If $\{u, v\} \subseteq B$ is an unsuited pair of two distinct vertices, that is, $c_{u,v} > 0$ and $c_{v,u} > 0$,
then, for any linear ordering $\sigma$ of $B$, there is at least one edge incident to $u$ and one edge incident to $v$ that cross in the linear ordering $\sigma$.
In particular, this holds for $\sigma = \pi|_B$.

Let $\pi = v_1v_2\dots v_n$, $u \in B$, and $f(u)$ denote the number of $v \in B \setminus \{u\}$ with $\pi(u) < \pi(v)$ such that $\{u, v\}$ is an unsuited pair.
Clearly, the number of unsuited pairs is equal to
$\sum_{u \in B} f(u)$.
Let $r$ denote the neighbor of $u$ in $G$ that maximizes $\pi(r)$.
Let $k = \pi(u)$ if $\pi(u) < \pi(r)$ and $k = \pi(u)-1$ otherwise.
If $v \in B \setminus \{u\}$ with $\pi(u) < \pi(v)$ is such that $\{u, v\}$ is unsuited, then $\pi(r) > 1$ and there is one edge incident to $v$ that crosses the edge $ur$ and is in the linear cut $E_k$.
Since $ur$ is in $E_k$ and $\pi$ has cutwidth $w$ with respect to $G$, there are at most $w-1$ such vertices $v$, that is, $f(u) \leq w-1$.

The above upper bound can be improved in some cases as follows.
Let 
\[
\operatorname{pos}(\pi; u) := |\{v \in B: \pi(v) \leq \pi(u)\}|,
\]
which is a bijection between $B$ and $[|B|]$.
If $\operatorname{pos}(\pi; u) = i \geq |B| - (w-1)$ and $j = |B| - i$, then there are $j$ vertices $v \in B \setminus \{u\}$ with $\pi(u) < \pi(v)$.
Therefore, $f(u) \leq j \leq w-1$ is true in this case.

Altogether, the number of unsuited pairs is bounded from above by
\begin{align*}
\sum_{i=1}^{|B|} f(\operatorname{pos}^{-1}(\pi; i)) 
&\leq \sum_{i=1}^{|B|-w} (w-1) + \sum_{i=|B|-(w-1)}^{|B|} (|B|-i)\\
&= (|B|-w)(w-1) + {w \choose 2}\\
&= |B|(w-1) - {w \choose 2}.
\end{align*}

For the sharpness,
consider the bipartite graph $G$ with partite sets $A$ and $B$ in Figure~\ref{fig:bip}.
Let $\pi$ denote the linear ordering of the vertices of $G$ that sorts the vertices according to their $x$-coordinates as in Figure~\ref{fig:bip}.
That means, the leftmost vertex $v$ in Figure~\ref{fig:bip} has $\pi(v)=1$, and so on.
Clearly, $\pi$ has cutwidth $6$ with respect to $G$.
It is easy to see that $(G,\pi)$ matches the bound of Theorem~\ref{thm:unsuited}; that is, there are
\[
|B|(6-1) - {6 \choose 2}
\]
unsuited pairs in $B$.
With the same type of construction, one finds graphs $G$ of arbitrarily large order and linear orderings of arbitrary cutwidth with respect to $G$ that match the bound of Theorem~\ref{thm:unsuited}.

\begin{figure}[h]
\centering
\begin{tikzpicture}[scale=0.95]
	\foreach \a in {1, ..., 17} {
		\node[dot] (A\a) at (\a-9,1) {};
	};
	\foreach \b in {1, ..., 12} {
		\node[dot] (B\b) at (\b-6.5,0) {};
	};
	\foreach \i in {1, ..., 12} {
		\draw (A\i) -- (B\i);
	};
	\foreach \i in {1, ..., 11} {
		\pgfmathparse{\i+6}
		\draw (A\pgfmathresult) -- (B\i);
	};
	\draw[rounded corners, dashed, darkgray] (-8.25,1.25) rectangle (8.25,0.75);
	\node (A) at (-7.5,1.5) {$A$};
	\draw[rounded corners, dashed, darkgray] (-5.75,0.25) rectangle (5.75,-0.25);
	\node (B) at (-5,-0.5) {$B$};
\end{tikzpicture}
\caption{A bipartite graph $G$ with partite sets $A$ and $B$.
If $\pi$ is the linear ordering of $V(G)$ induced by the $x$-coordinates of the vertices, then the bound of Theorem~\ref{thm:unsuited} is tight for $G$ and $\pi$.
}
\label{fig:bip}
\end{figure}
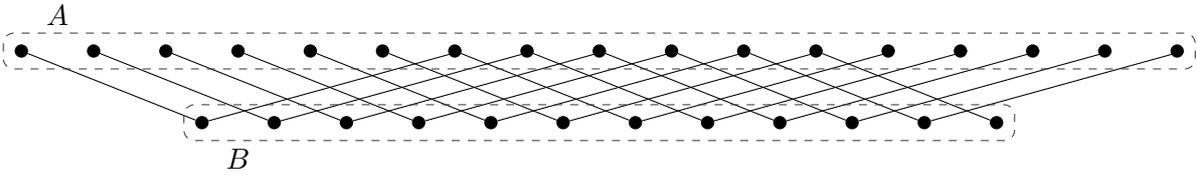
\end{proof}

We close this section by stating Corollary~\ref{cor:unsuited}, which is an immediate consequence of Theorem~\ref{thm:unsuited}.
We remark that it can be determined in polynomial time if a pair is suited or not~\cite{dujmovic2004efficient}.

\corunsuited*

\section{Proof of Theorem~\ref{thm:cycle} and Theorem~\ref{thm:3cycle}}\label{sec:cycle}
We directly begin with the proof of Theorem~\ref{thm:cycle}.
\thmcycle*
\begin{proof}
We assume, for a contradiction, that $\max \left\{ \frac{c_{v_1,v_2}}{c_{v_2,v_1}}, \frac{c_{v_2,v_3}}{c_{v_3,v_2}}, \dots, \frac{c_{v_t,v_1}}{c_{v_1,v_t}} \right\} < \frac{1}{t-1}$.
In this proof, for ease of notation, $u_i$ always denotes an arbitrary neighbor of $v_i$ for $i \in [t]$.
If we sum or quantify over ``$u_i$'', then this is meant as ``$u_i \in N_G(v_i)$''.

Let $i,j \in [t]$ with $i \neq j$ be arbitrary.
Recall that the edges $u_iv_i$ and $u_jv_j$ cross in a linear ordering $\sigma$ of $B$ with $\sigma(v_i) < \sigma(v_j)$ if and only if $\pi(u_i) > \pi(u_j)$.
We define
\[
a_{u_i,u_j} = \begin{cases}
	1, &\text{if $u_iv_i$ and $u_jv_j$ cross in a linear ordering $\sigma$ of $B$ with $\sigma(v_i) < \sigma(v_j)$,}\\
	0, &\text{otherwise}.
\end{cases}
\]
From the definition it follows that, for all $i,j \in [t]$ with $i \neq j$, $c_{v_i, v_j} = \sum_{u_i,u_j} a_{u_i,u_j}$, and $a_{u_i,u_j} = a_{u_j,u_i} = 0$ if (and only if) $u_i=u_j$.
Let $d_i$ denote the degree of $v_i$ in $G$ for $i \in [t]$, $D_{i,j} = \prod_{\substack{k \in [t]\\k \notin \{i,j\}}} d_k$ for $i,j \in [t]$, and $S$ denote the number of sets $\{u_1,u_2, \dots, u_t\}$ of cardinality at least 2.
It also follows from the definition that 
\begin{itemize}
\item $a_{u_1,u_2} + a_{u_2,u_3} + \dots + a_{u_t,u_1} \geq 1$ for all $u_1,u_2,\dots,u_t$ with $|\{u_1,u_2,\dots,u_t\}| \geq 2$, and
$a_{u_1,u_2} + a_{u_2,u_3} + \dots + a_{u_t,u_1} = 0$ for all $u_1,u_2,\dots,u_t$ with $u_1=u_2=\dots=u_t$,
which implies $D_{1,2}c_{v_1,v_2} + D_{2,3}c_{v_2,v_3} + \dots + D_{1,t}c_{v_t,v_1} \geq S$ after summing all these inequalities up, and
\item $a_{u_2,u_1} + a_{u_3,u_2} + \dots + a_{u_1,u_t} \leq t-1$ for all $u_1,u_2,\dots,u_t$ with $|\{u_1,u_2,\dots,u_t\}| \geq 2$, and
$a_{u_2,u_1} + a_{u_3,u_2} + \dots + a_{u_1,u_t} = 0$ for all $u_1,u_2,\dots,u_t$ with $u_1=u_2=\dots=u_t$,
which implies $D_{1,2}c_{v_2,v_1} + D_{2,3}c_{v_3,v_2} + \dots + D_{1,t}c_{v_1,v_t} \leq (t-1)S$ after summing these inequalities up.
\end{itemize}

At this point, since $c_{v_{i+1},v_i} > c_{v_i,v_{i+1}}(t-1)$ holds for every $i \in [t]$ by assumption, where $v_{t+1}=v_1$, we arrive at the contradiction
\begin{align*}
(t-1)S
&\geq D_{1,2}c_{v_2,v_1} + D_{2,3}c_{v_3,v_2} + \dots + D_{1,t}c_{v_1,v_t}
\\
&> (D_{1,2}c_{v_1,v_2} + D_{2,3}c_{v_2,v_3} + \dots + D_{1,t}c_{v_t,v_1})(t-1)
\\
&\geq (t-1)S,
\end{align*}
which completes the proof.
\end{proof}

We continue with the proof of Theorem~\ref{thm:3cycle}.

\thmthreecycle*
\begin{proof}
\begin{figure}
\centering
\begin{tikzpicture}
\node[dot,label=below:$v_3$] (v3) at (-3,0) {};
\node[dot,label=below:$v_2$] (v2) at (0,0) {};
\node[dot,label=below:$v_1$] (v1) at (3,0) {};
\foreach \i in {1, ..., 5} {
	\node[dot,label=above:{$u_{\i}$}] (u\i) at (2*\i-6,2) {};
	\draw (v2) to (u\i);
};
\foreach \i in {0, ..., 5} {
	\node[inner sep=0,label=above:{$A_{\i}$}] (a\i) at (2*\i-5.333,2) {};
	\draw[thick] (v3) to (a\i);
};
\foreach \i in {0, ..., 5} {
	\node[inner sep=0,label=above:{$B_{\i}$}] (b\i) at (-2*\i+5.333,2) {};
	\draw[thick] (v1) to (b\i);
};
\end{tikzpicture}
\caption{Definition of the variables $A_i$ and $B_j$ for $d=5$.}\label{fig:thm-3cycle}
\end{figure}
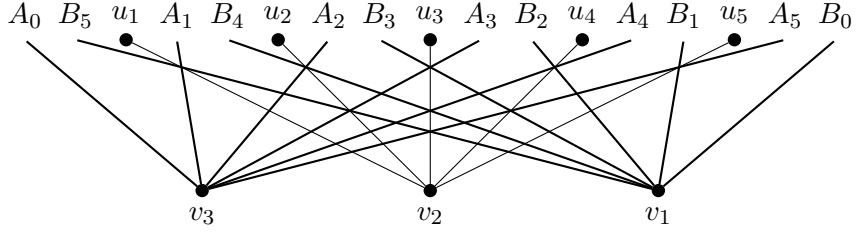
We assume that $\max \left\{ \frac{c_{v_1,v_2}}{c_{v_2,v_1}}, \frac{c_{v_2,v_3}}{c_{v_3,v_2}} \right\} < \phi - 1$ and prove $\frac{c_{v_3,v_1}}{c_{v_1,v_3}} \geq \phi - 1$.
Let $d_1$, $d=d_2$ and $d_3$ denote the degree of $v_1$, $v_2$ and $v_3$ in $G$, respectively.
Since we assume that the neighborhoods are pairwise disjoint, we have $c_{v_i,v_j} + c_{v_j,v_i} = d_id_j$ for $i,j \in [3]$ with $i \neq j$.
With this, the assumption is equivalent to
\[
\frac{c_{v_1,v_2}}{c_{v_2,v_1}} < \phi - 1
\quad\Longleftrightarrow\quad
d_1d - c_{v_2,v_1} = c_{v_1,v_2} < (\phi - 1)c_{v_2,v_1}
\quad\Longleftrightarrow\quad
\frac{c_{v_2,v_1}}{d_1} > \frac{d}{\phi}
\]
and, similarly, $\frac{c_{v_3,v_2}}{d_3} > \frac{d}{\phi}$.  
The goal is to prove
\[
\frac{c_{v_3,v_1}}{c_{v_1,v_3}} \geq \phi - 1
\quad\Longleftrightarrow\quad
c_{v_3,v_1} \geq (\phi - 1)c_{v_1,v_3} = (\phi - 1)(d_1d_3 - c_{v_3,v_1})
\quad\Longleftrightarrow\quad
\frac{c_{v_3,v_1}}{d_1d_3} \geq \frac{\phi-1}{\phi}.
\]
Let $N_G(v_2) = \{u_1, u_2, \dots, u_d\}$ denote the neighborhood of $v_2$ in $G$, where $\pi(u_1) < \pi(u_2) < \dots < \pi(u_d)$.
Moreover, for $i \in \{0, 1, \dots, d\}$, let
\begin{itemize}
\item $A_i = |\{w \in N_G(v_3): \pi(u_i) < \pi(w) < \pi(u_{i+1})\}|$, and
\item $B_i = |\{w \in N_G(v_1): \pi(u_{d-i}) < \pi(w) < \pi(u_{d-i+1})\}|$,
\end{itemize}
where we set $\pi(u_0) = -\infty$ and $\pi(u_{d+1}) = +\infty$ for ease of notation; see Figure~\ref{fig:thm-3cycle} for an illustration.
Note that $c_{v_2,v_1} = \sum_{i=1}^{d}iB_i$, $c_{v_3,v_2} = \sum_{i=1}^{d}iA_i$, and $c_{v_3,v_1} \geq \sum_{i=1}^{d} \sum_{j=1}^{i} A_iB_{d+1-j}$.
Consider the optimization problem~(\ref{op}).

\begin{equation}\tag{$O$}\label{op}
\begin{array}{lr@{}ll}
\min \: &\displaystyle\sum_{i=1}^{d} \displaystyle\sum_{j=1}^{i} a_i&b_{d+1-j},&\\
\text{s.t.} \: &\displaystyle\sum_{i=1}^{d} ia_i \: &\geq \frac{d}{\phi},&\\
&\displaystyle\sum_{i=1}^{d} ib_i \: &\geq \frac{d}{\phi},&\\
&\displaystyle\sum_{i=0}^{d} a_i \: &= 1,&\\
&\displaystyle\sum_{i=0}^{d} b_i \: &= 1,&\\
&a_i&\geq 0, &i=0,1,\dots,d,\\
&b_i&\geq 0, &i=0,1,\dots,d.
\end{array}
\end{equation}

With the normalizations $a_i = \frac{A_i}{d_3}$ and $b_i = \frac{B_i}{d_1}$ for $i \in \{0,1,\dots,d\}$, it suffices to prove that $\operatorname{OPT}(\text{\ref{op}}) \geq \frac{\phi-1}{\phi}$.
Since the set of feasible solutions of (\ref{op}) is compact and the objective function is continuous, the optimization problem (\ref{op}) has an optimal solution.
Note that, in any optimal solution of (\ref{op}), we may assume that the constraints $\sum_{i=1}^d ia_i \geq \frac{d}{\phi}$ and $\sum_{i=1}^d ib_i \geq \frac{d}{\phi}$ are fulfilled with equality:
Otherwise, if say $\sum_{i=1}^d ia_i > \frac{d}{\phi}$, there exist $j \in [d]$ and $\epsilon > 0$ such that $a_j' = a_j - \epsilon$, $a_0' = a_0 + \epsilon$, and $a_\ell' = a_\ell$ for $\ell \in [d] \setminus \{j\}$ together the $b_i$ would constitute a feasible solution of (\ref{op}) with an equal or smaller objective value and $\sum_{i=1}^d ia_i' = \frac{d}{\phi}$.
Therefore, the following discussion only considers feasible solutions that fulfill these inequality constraints of (\ref{op}) with equality.

Suppose that $a_i$ and $b_i$ for $i \in \{0,1,\dots,d\}$ denote an optimal solution of (\ref{op}) with as few positive variables as possible.
\begin{claim}\label{claim:pos}
At most two $a_i$ and at most two $b_i$ are positive.
\end{claim}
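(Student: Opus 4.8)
The plan is to argue by a local perturbation (exchange) argument: assuming an optimal solution with the fewest positive variables, I show that having three or more positive $a_i$ (or three or more positive $b_i$) leads to a contradiction, because one can always shift mass among them so as to keep all constraints satisfied (with equality where needed) while not increasing the objective and strictly decreasing the number of positive variables. By symmetry of the two halves of the problem it suffices to treat the $a_i$.

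First I would fix the $b_i$ and view the objective as a linear functional of the vector $(a_0,a_1,\dots,a_d)$: writing $c_i = \sum_{j=1}^{i} b_{d+1-j}$ for $i \in [d]$ and $c_0 = 0$, the objective equals $\sum_{i=0}^{d} c_i a_i$, and the $a$-variables are constrained only by the two linear equalities $\sum_{i=0}^d a_i = 1$ and $\sum_{i=1}^d i a_i = d/\phi$ together with nonnegativity. This is a linear program in the $a_i$ over a polytope; its feasible region is a face of the simplex cut by one further hyperplane, so every vertex has at most two positive coordinates (two equality constraints plus the active nonnegativity constraints pin down all but at most two coordinates). Since the objective is linear, some optimal solution is attained at a vertex, hence has at most two positive $a_i$. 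If the given optimal solution had three or more positive $a_i$, it would not be a vertex of this polytope, so it would lie on a line segment of optimal solutions inside the $a$-polytope; moving along that segment to its endpoint gives another optimal solution of the original problem~(\ref{op}) with strictly fewer positive $a_i$ and the same $b_i$, contradicting minimality. The same argument with the roles of $a$ and $b$ swapped — now freezing the $a_i$ and noting the objective is linear in $(b_0,\dots,b_d)$ with the two equality constraints $\sum b_i = 1$, $\sum i b_i = d/\phi$ — shows at most two $b_i$ are positive.

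The only point that needs care is that the perturbation must preserve \emph{all} constraints of (\ref{op}) simultaneously, including the equalities $\sum i a_i = d/\phi$ (we have already reduced to optimal solutions meeting the two inequality constraints with equality) and the normalizations $\sum a_i = 1$; this is exactly why I phrase the reduction as "move to a vertex of the $a$-polytope with the $b$'s held fixed", since every point of that polytope is automatically feasible for (\ref{op}). The main obstacle, and the step to state carefully, is the vertex-dimension count: a polytope obtained from the $d$-simplex $\{a \ge 0, \sum a_i = 1\}$ by intersecting with one additional hyperplane has all vertices supported on at most two coordinates, because at a vertex the number of linearly independent active constraints must equal $d+1$ (the ambient dimension), of which two are the equalities, forcing at least $d-1$ of the bounds $a_i \ge 0$ to be active. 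I would spell this out explicitly rather than invoke it as folklore, since it is the crux of the claim.
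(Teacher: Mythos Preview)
Your argument is correct and rests on the same underlying observation as the paper's proof: with the $b_i$ frozen, the objective is linear in the $a_i$, and the feasible $a$'s form a polytope cut out of the simplex by one additional hyperplane, so optimality together with minimality of the support forces at most two positive coordinates.

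The difference is purely in presentation. You invoke the general LP fact that a vertex of a polytope in $\mathbb{R}^{d+1}$ defined by two equalities and nonnegativity has support at most two, and then use that a non-vertex optimal point can be slid along a segment of optimal solutions until a coordinate vanishes. The paper instead carries out this slide explicitly: given three positive entries $a_i,a_j,a_k$ with $i<j<k$, it takes the unique (up to scaling) direction supported on $\{i,j,k\}$ that preserves both $\sum a_\ell$ and $\sum \ell a_\ell$, namely $a_i' = a_i+\epsilon$, $a_j' = a_j-\epsilon-\delta$, $a_k' = a_k+\delta$ with $\delta = \frac{j-i}{k-j}\epsilon$, and observes that the resulting change in the objective is linear in $\epsilon$, so some sign of $\epsilon$ makes it non-increasing while eventually killing one coordinate. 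Your version is cleaner and immediately generalizes; the paper's version is more self-contained and avoids appealing to polytope vertex theory. Either way, the symmetric argument for the $b_i$ goes through verbatim, as you note.
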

\begin{proof}
For $d < 2$, the statement is clear.
Let $d \geq 2$ and assume, for a contradiction, that $a_i$, $a_j$ and $a_k$ are positive for $i,j,k \in \{0,1,\dots,d\}$ with $i<j<k$.
For $\epsilon \in \mathbb{R} \setminus \{0\}$ of small enough absolute value,
let $\delta = \frac{j-i}{k-j}\epsilon$,
and define $a_i' =  a_i + \epsilon$, $a_j' = a_j-\epsilon-\delta$, $a_k' = a_k + \delta$, and $a_\ell' = a_\ell$ for $\ell \in \{0,1,\dots,d\} \setminus \{i,j,k\}$.
Clearly, $a_i'$ and $b_i$ form a feasible solution of (\ref{op}).
If $\Delta$ denotes the change of the objective function under this shift, then
\[
\Delta 
= \epsilon\left(\sum_{\ell = j+1}^{k} b_{d+1-\ell}\right) - \delta\left(\sum_{\ell=i+1}^{j} b_{d+1-\ell}\right)
= \epsilon \left(1 - \frac{j-i}{k-j}\right) \left( \sum_{\ell = j+1}^{k} b_{d+1-\ell} - \sum_{\ell=i+1}^{j} b_{d+1-\ell} \right).
\]
This shows that it is possible to choose $\epsilon$ such that $\Delta \leq 0$, and the $a_i'$ and $b_i$ form a feasible solution of (\ref{op}) with a smaller number of positive variables, a contradiction to the choice of the $a_i$ and $b_i$.
This proves that at most two $a_i$ are positive.
The proof that at most two $b_i$ are positive is analogous.
\end{proof}

Supplementing Claim~\ref{claim:pos}, note that at least two $a_i$ and at least two $b_i$ must be positive since $\frac{d}{\phi}$ is irrational.
Let $p,q,s,t \in \{0,1,\dots,d\}$ with $p < q$ and $s < t$ be such that $a_p$, $a_q$, $b_s$ and $b_t$ are positive. 
The remaining variables are zero.
At this point, since we assumed that all constraints of (\ref{op}) are fulfilled with equality, the positive variables are uniquely determined and given by $a_p = f(p,q)$, $a_q = g(p,q)$, $b_s = f(s,t)$ and $b_t = g(s,t)$, where
\[
f(x,y) = \frac{1}{y-x}\left(y - \frac{d}{\phi}\right) 
\quad\text{and}\quad
g(x,y) = \frac{1}{y-x}\left(\frac{d}{\phi} - x\right).
\]
Since all variables are nonnegative, we have $p,s \leq \frac{d}{\phi} \leq q,t$.
With these inequalities and the partial derivatives
\begin{align*}
\frac{\partial f}{\partial x} = \frac{\phi y - d}{\phi (y-x)^2},\qquad
\frac{\partial g}{\partial x} = \frac{d - \phi y}{\phi (y-x)^2},\\
\frac{\partial f}{\partial y} = \frac{d - \phi x}{\phi (y-x)^2},\qquad
\frac{\partial g}{\partial y} = \frac{\phi x - d}{\phi (y-x)^2},
\end{align*}
we see that $a_p$ and $b_s$ are monotone increasing in the indices,
while $a_q$ and $b_t$ are monotone decreasing in the indices.
Depending on the concrete values of these indices, the objective function of (\ref{op}) takes different forms.

\medskip\noindent
\textbf{Case 1:} $p+s > d$. In this case, the objective function of (\ref{op}) is simply
\[
a_p(b_s + b_t) + a_q(b_s + b_t) = a_p + a_q = 1 = \operatorname{OPT}(\text{\ref{op}}) > \frac{\phi-1}{\phi}
\]
due to the constraints.

\medskip\noindent
\textbf{Case 2:} $p+s \leq d$. We distinguish four subcases.

\medskip\noindent
\textbf{Case 2a:} $s+q > d$ and $p+t > d$. In this subcase, the objective function of (\ref{op}) is $a_pb_t + a_q = a_qb_s + b_t$.
If $q < d$, the variables $a_p' = a_p + \epsilon$, $a_q' = a_q - \epsilon - \delta$, $a_d' = \delta$, and $a_\ell' = a_\ell$ for $\ell \in \{0,1,\dots,d-1\} \setminus \{p,q\}$ with $\epsilon > 0$ small enough and $\delta = \frac{q-p}{d-q}\epsilon$ would, together with the $b_i$, constitute a feasible solution of (\ref{op}) with a smaller objective value, since $a_pb_s$ is not a summand of the objective function.
This implies $q = d$ and, similarly, $t = d$.
Substituting this into the objective function gives
\[
F(p,s) = \frac{(\phi-1) d \left(d - s \phi\right) + \phi \left(d - s\right) \left(d - p \phi\right)}{\phi^{2} \left(d - p\right) \left(d - s\right)}.
\]
With
\[
\frac{\partial F}{\partial p} = \frac{-(\phi-1)^{2} d^{2}}{\phi^{2} \left(d - p\right)^{2} \left(d - s\right)}
\quad\text{and}\quad
\frac{\partial F}{\partial s} = \frac{-(\phi-1)^{2} d^{2}}{\phi^{2} \left(d - p\right) \left(d - s\right)^{2}}
\]
we see that $F$ is monotone decreasing in both variables.
Since $\frac{d}{2} < \frac{d}{\phi}$, the inequality $p+s \leq d$ is fulfilled with equality.
With $p,s \leq \frac{d}{\phi}$ this implies $d - \frac{d}{\phi} \leq p \leq \frac{d}{\phi}$.
Now,
\[
\frac{\partial F(p, d-p)}{\partial p} = \frac{(\phi-1)^{2} d^{2} \left(d - 2 p\right)}{p^{2} \phi^{2} \left(d - p\right)^{2}}
\]
is positive for $p \in \left[d - \frac{d}{\phi}, \frac{d}{2}\right)$, zero for $p = \frac{d}{2}$, and negative for $p \in \left(\frac{d}{2},\frac{d}{\phi}\right]$.
This implies that a minimum of $F(p,d-p)$ as a function of $p$ on $\left[d - \frac{d}{\phi}, \frac{d}{\phi}\right]$ is attained at the border.
We get
\[
\operatorname{OPT}(\text{\ref{op}})
\geq F\left(d - \frac{d}{\phi}, \frac{d}{\phi}\right) 
= F\left(\frac{d}{\phi}, d - \frac{d}{\phi}\right) 
= 2 - \phi 
= \frac{\phi-1}{\phi}.
\]

\medskip\noindent
\textbf{Case 2b:} $s+q > d$ and $p+t \leq d$. In this subcase, the objective function of (\ref{op}) is $a_q$.
Note that we have $p \leq d - t \leq d - \frac{d}{\phi}$.
Since $a_q$ is monotone decreasing in $p$ and $q$, choosing $p = d - \frac{d}{\phi}$ and $q = d$ implies $\operatorname{OPT}(\text{\ref{op}}) \geq 2-\phi = \frac{\phi-1}{\phi}$.

\medskip\noindent
\textbf{Case 2c:} $s+q \leq d$ and $p+t > d$. This subcase is analogous to Case~2b.

\medskip\noindent
\textbf{Case 2d:} $s+q \leq d$ and $p+t \leq d$. In this subcase, the objective function of (\ref{op}) is $a_qb_t$.
Since $a_q$ and $b_t$ are monotone decreasing in all indices, we have that equality holds in $s \leq d-q$ and $p \leq d-t$.
Let $F(q,t)$ denote the function after substituting $s=d-q$ and $p=d-t$ into the objective function.
The gradient of $F(q,t)$ is nonzero on $\left[\frac{d}{\phi},d\right]^2$.
This means that $F(q,t)$ attains its minimum on $\left[\frac{d}{\phi},d\right]^2$ at the border.
By symmetry, we may assume that $q \in \left\{\frac{d}{\phi},d\right\}$.
If $q = \frac{d}{\phi}$, then
\[
F\left(\frac{d}{\phi},t\right) = \frac{d \left(2 - \phi\right)}{\phi t - (\phi - 1) d}
\]
is monotone decreasing as a function of $t$, since $\phi t \geq d > (\phi-1)d$.
Choosing $t = d$ gives
\[
\operatorname{OPT}(\text{\ref{op}}) \geq F\left(\frac{d}{\phi},d\right) = 2-\phi = \frac{\phi-1}{\phi}.
\]
If $q = d$, then
\[
\frac{\partial F(d,t)}{\partial t} = \frac{d \left(2 (\phi-1) d - \phi t\right)}{t^{3} \phi^2},
\]
which is positive for $t \in \left[\frac{d}{\phi},\frac{2(\phi-1)d}{\phi}\right)$, zero for $t = \frac{2(\phi-1)d}{\phi}$, and negative for $t \in \left(\frac{2(\phi-1)d}{\phi},d\right]$.
This implies that a minimum of $F(d, t)$ as a function of $t$ on $[\frac{d}{\phi},d]$ is attained at the border.
We get 
\[
\operatorname{OPT}(\text{\ref{op}}) \geq F\left(d,\frac{d}{\phi}\right) = F(d,d) = 2-\phi = \frac{\phi-1}{\phi}.
\]

\medskip
This establishes $\frac{c_{v_3,v_1}}{c_{v_1,v_3}} \geq \phi - 1$.
The sharpness of this bound readily follows from configurations constructed from the stated minima of (\ref{op}) in the subcases of Case~2.
This completes the proof of Theorem~\ref{thm:3cycle}.
\end{proof}

\section{Conclusion}\label{sec:conclusion}
In Theorem~\ref{thm:size} we gave two upper bounds on the size of a graph in terms of its order and cutwidth.
Although we know that the bound of Theorem~\ref{thm:size} is asymptotically sharp, we believe that it is possible to improve it.

In Theorem~\ref{thm:unsuited} we gave a sharp upper bound on the number of unsuited pairs of an instance $(G = (A \dot{\cup} B, E), \pi)$ of OSCM[CW] in terms of $|B|$ and the cutwidth of $G$ with respect to $\pi$.
In Corollary~\ref{cor:unsuited} we concluded from this that OSCM[CW] is solvable in $\mathcal{O}^*(2^{|B|w})$-time by simply enumerating all relative positions of unordered pairs in $B$.
Of course, such an algorithm considers also illegal relative positions of unordered pairs in $B$ that do not correspond to a linear ordering of $B$.
Therefore, an $\mathcal{O}^*(\alpha^{|B|w})$-time algorithm for OSCM[CW] with $\alpha \in (1,2)$ as small as possible would be nice to have.
Recall that there is no \FPT-algorithm for OSCM[CW] unless $\P = \NP$ as explained in Section~\ref{sec:intro}.

In Theorem~\ref{thm:cycle} and Theorem~\ref{thm:3cycle} we partially answer Question~\ref{q:cycle} and, to this effect, Question~\ref{q:greedy}.
We saved the best for the end, as Question~\ref{q:cycle} is the most interesting open problem in our opinion.
To conclude the article, we briefly sketch a connection between the number of substrings of a string and Question~\ref{q:cycle}.
Let $B$ be a finite set of $t$ characters,
let $s = s_1s_2 \dots s_a$ be a string consisting of the $t$ different characters in $B$, 
and let $E$ denote the set of edges $\{i, s_i\}$ for $i \in [a]$.
Then $(([a] \cup B, E), \operatorname{id})$ is an instance of OSCM.
Observe that, for two distinct character $u,v \in B$, the number of substrings ``$vu$'' of $s$ is equal to the crossing number $c_{u,v}$.
Conversely, if $(G = (A \dot{\cup} B, E), \operatorname{\pi})$ is an instance of OSCM, where the neighborhoods of the vertices in $B$ are pairwise disjoint, then we can construct a corresponding string $s = s_1s_2 \dots s_{|A|}$.
The substrings of a string formulation really represents the core problem of Question~\ref{q:cycle}.
We are convinced that, with this formulation, a good flag algebraist has a decent chance to fully answer Question~\ref{q:cycle}.


\end{document}